\documentclass[11pt,a4paper]{article}
\usepackage[T1]{fontenc}
\usepackage[utf8]{inputenc}
\usepackage{amsmath,amssymb,amsfonts,amsthm}
\usepackage{booktabs}
\usepackage{rotating}
\usepackage[a4paper,margin=2.6cm]{geometry}
\usepackage[colorlinks,citecolor=blue,urlcolor=blue,linkcolor=blue,breaklinks,bookmarks]{hyperref}

\usepackage[numbers,square,sort,elide]{natbib}

\newtheorem{theorem}{Theorem}
\newtheorem{lemma}[theorem]{Lemma}
\newtheorem{proposition}[theorem]{Proposition}
\newtheorem{corollary}[theorem]{Corollary}
\theoremstyle{remark}

\newcommand{\lean}{\textsf{Lean\,4}}
\newcommand{\isa}{\textsf{Isabelle/HOL}}
\newcommand{\tc}[1]{\texttt{#1}}
\newcommand{\val}[1]{\lfloor #1\rfloor}
\newcommand{\Pos}{\mathcal{P}}
\newcommand{\Ess}{\mathrel{\mathit{Ess}}}
\newcommand{\NE}{\mathit{NE}}
\newcommand{\actE}{^{E}}

\title{Proofs Without Nominals: G\"odel's Ontological
  Argument, its Shallow Embedding, and the Open Questions of the
  \emph{Monatshefte} Notes}
\author{Christoph Benzm\"uller\\[2pt]
  \small AI Systems Engineering, Otto-Friedrich-Universit\"at Bamberg, Germany\\ \small Faculty of Mathematics and Computer Science, Freie Universit\"at Berlin, Germany\\
  \small \texttt{christoph.benzmueller@uni-bamberg.de}}
\date{}

\begin{document}
\maketitle

\begin{abstract}
The shallow embedding of higher-order modal logic in classical higher-order logic, used in
Benzm\"uller and Scott's \emph{Notes on G\"odel's and Scott's variants of the ontological
argument} (2025), reaches beyond the modal object language of the arguments: its property
quantifiers range over terms that may also express nominals and satisfaction operators of
hybrid logic, and a proof using one proves a theorem of the embedding that need not be one
of the modal logic.  That the framework affords this is not new, and whether a
result is one of the modal logic can be settled in two ways: by replaying it in an explicit
proof calculus, done by hand for chosen theorems, or by analysing the proofs the
embedding itself produces, which this article does mechanically, for every result at once.  Every statement the \emph{Notes} prove has a proof
inside the object language: 294 written out by hand and machine-checked, none
using a nominal.  The proofs the \emph{Notes} themselves give instantiate no nominal either; what the
detector flags there are terms a prover substituted.

\looseness=-1 The three questions the \emph{Notes} leave open are settled too, and without nominals,
but the conjunction axiom has to be emended: generalised in the \emph{Notes} to
G\"odel's ``any number of summands'', it covers the conjunction of no properties, and of one;
the empty one alone settles all three, and the two together yield what a separate axiom of
G\"odel's is for.  This article restricts the conjunction axiom
to at least two different conjuncts, the reading G\"odel's footnote suggests, and the
questions are settled again, by proofs that turn on the argument rather than a
degenerate instance.  The restriction holds of the object language only: with a nominal the
axioms make the accessibility relation the identity and the readings coincide.
Every theorem is verified in \isa{} and independently in \lean{}; the countermodels are
Nitpick's, certified by the build.
\end{abstract}

\noindent\textbf{Keywords:} G\"odel's ontological argument; shallow semantical embedding; hybrid logic;
nominals; satisfaction operators; modal collapse; higher-order modal logic; interactive
theorem proving; Isabelle/HOL; Lean~4; LogiKEy

\allowdisplaybreaks

\section{Introduction}

In~\cite{J75}, Benzm\"uller and Scott revisit Kurt G\"odel's 1970 sketch of a modal
ontological argument, and Dana Scott's variant of it, with \isa{}, a proof assistant for
classical higher-order logic (HOL).  The higher-order modal logic (HOML) of the arguments is
shallowly embedded in HOL~\cite{J48,ECAI2014}: a modal proposition is a predicate on worlds, a
modal property a function from individuals to such predicates, and the connectives are defined
on these.  The embedding is faithful with respect to Henkin semantics --- sound and complete for
the quantified modal logics whose propositional domains are those of a Henkin model of
HOL~\cite{R45,J26}\footnote{That clause is what the construction requires, not a gloss on it,
  and~\cite{J26} does not state it with its theorems; see Section~\ref{sec:discussion}.} --- and it
has been the workhorse of computational metaphysics for a
decade~\cite{ECAI2014,IJCAI2016,GoedelGod-AFP,SimplifiedOntologicalArgument-AFP,C44}.  It also
reaches further than the modal object language, deliberately so.  Its quantifiers over properties
range over the property domains of a model of HOL, and these contain ``being at world $w$'',
``being $x$ at $w$'' and ``being at a successor of $w$'' --- terms of HOL that the modal object
language of the arguments cannot express: a nominal of hybrid logic, a formula containing one, a
backward-looking modality applied to one.  That is no coincidence.  The embedding descends from
Brown's encoding of hybrid logic in HOL~\cite{Brown2005}; Benzm\"uller and Paulson, who
introduced it for quantified multimodal logics~\cite{R45,J26}, gave nominals and the satisfaction
operator as $\lambda$-terms among the further operators the embedding admits, and dropped an
earlier syntactic restriction that had kept world-dependent subterms out~\cite{B9}, since
soundness and completeness do not need it --- what its absence permits is the subject here; Wisniewski and Steen~\cite{WisniewskiSteen2014}
treated the quantified nominal case explicitly, and the TPTP syntax for non-classical
logics~\cite{W56} includes hybrid syntax.  A proof that instantiates a property variable with a
nominal is therefore correct in the embedding and holds in every model of HOL.  But it need not
correspond to any derivation in the modal logic of the argument, and what it establishes need not
hold in the models a modal logician would admit --- those whose property domains are closed under
the operations of the object language alone (Section~\ref{sec:discussion}).

Whether a given result is a result of the modal logic has so far been answered from outside the
embedding: Kanckos and Woltzenlogel Paleo~\cite{KanckosPaleo2017} gave four proofs by hand in a
natural deduction calculus for a rigid higher-order modal logic, two of them of G\"odel's argument.
The other route is to analyse the proofs the embedding itself produces, which is what this article
implements; being mechanical, it applies to every result of the \emph{Notes} at once.

This article does two things.

First, it settles the three questions the \emph{Notes} leave open, all three concerning modal
logic \textsf{S4} and one of them \textsf{S5} as well:
\begin{enumerate}
\item For G\"odel's argument with the definition of essence repaired
  (\cite[Fig.~7]{J75}), theorem \tc{Th3} --- if a God-like being possibly exists then it
  necessarily exists: ``no such proof has yet been reported, nor a counterexample found in
  logic S4 \dots\ This question thus remains open''~\cite[p.~591]{J75}.
\item For G\"odel's argument with the notion of necessary property inclusion repaired
  instead (\cite[Fig.~8]{J75}), the same theorem \tc{Th3} in \textsf{S4}, and the two
  lemmas on the uniqueness of essences, \tc{UniqueEss1} and \tc{UniqueEss2}, open in
  \textsf{S5} already: ``neither
  did the model finder report a counterexample nor did the automated theorem provers come
  up with a proof \dots\ (though we expect them to be provable)''~\cite[Sect.~4.5]{J75}.
\item For Scott's variant with Scott's axiom \tc{A3}, the positivity of being God-like,
  replaced by G\"odel's conjunction axiom in its generalised form \tc{Ax1Gen}: the \textsf{S4} countermodels that Nitpick finds for
  Scott's necessary-existence theorem \tc{T3} ``could not yet be reproduced \dots, but also
  no proof could be found.  This inconclusive situation thus requires further
  study''~\cite[Sect.~5.3]{J75}.
\end{enumerate}
The dataset of the \emph{Notes} records the first two questions in several quantifier settings,
as ten open statements: \tc{Th3} of variants~2 and~3 in \textsf{S4}, actualist and possibilist,
and \tc{UniqueEss1} and \tc{UniqueEss2} of variant~3, actualist, possibilist and mixed; the
third question has no statement there.  The possibilist and mixed copies are settled as well
(Section~\ref{sec:copies}).
The three questions are settled, affirmatively and hybrid-free, but not on the first
attempt,\footnote{The question of this article arose in the discussion after a talk by the author
  on the \emph{Notes} and the questions they leave open, at the conference \emph{Kurt
  G\"odel, \guilsinglright Gott\guilsinglleft{} und \guilsinglright Teufel\guilsinglleft} of
  the Kurt-G\"odel-Forschungsstelle of the Berlin-Brandenburg Academy of Sciences and
  Humanities, organised by Eva-Maria Engelen and Alexander Englert, Berlin, 29--31 July 2026.
  The talk, \emph{On the Logical Necessity of God and the Impossibility of the Devil in
  G\"odel's Modal Ontological Argument: Insights gained in Experiments in Computational
  Metaphysics}, had suggested that current AI systems might help to settle the open proofs; the experiments began
  the same day, and a day later there were proofs of the \textsf{S4} questions.  They were not
  circulated, because nominals appeared to be doing work in them --- and that is the question
  taken up here.  Section~\ref{sec:verification} says what the assistant did.}
and the detour is the more interesting part.  \emph{Hybrid-free} is made precise in
Section~\ref{sec:criterion}: every term the proof substitutes for a property or a proposition
variable is expressible in the modal object language.

Section~\ref{sec:gex} keeps the generalised conjunction axiom \tc{Ax1Gen} as the \emph{Notes}
state it.\footnote{The third-order form is a later rendering of G\"odel's footnote on ``any
  number of summands'', following Anderson and Gettings~\cite{AndersonGettings1996} and
  Fitting~\cite{Fitting2002}; a related proposal occurs in G\"odel's
  Nachlass~\cite[p.~1019]{KanckosLethen2021}, and Section~\ref{sec:setting} states the axiom.
  What is weighed below is therefore a reading that literature shares, not a stipulation of the
  \emph{Notes}.}  With the exclusivity axiom on positive properties (\tc{Ax2a}) and their closure
under necessary implication (\tc{Ax4}), it follows with no assumption on the accessibility
relation that an actual God-like being exists at every world, and with it \tc{Th3}, \tc{T3} and
modal collapse in \textsf{K} --- too quick to be a convincing answer, because the proof turns on
the empty conjunction: where no actual God-like being exists, \tc{Ax1Gen} applied to the empty
set makes that very absence a positive property.  A reader who rejects that instance rejects the
proof, and Section~\ref{sec:emendation} shows the rejection to be right: the empty and the
singleton conjunction are what make the axiom too strong.  Reading G\"odel's ``any number of
summands'' as at least two different conjuncts excludes exactly them, and under this restriction
the questions are settled again --- for variant~2 and Scott's variant under a condition on the
worlds, which a nominal argument discharges (Section~\ref{sec:plain}); for variant~3, where the
restricted axiom is refuted outright by a one-world model, only once the \emph{Notes}' repair of the essence is confined to the essence itself rather than left
inside necessary property inclusion, where \tc{Ax4} inherits it: so confined, variant~3 proves
everything variant~2 proves, \tc{Th4} included (Section~\ref{sec:plain}).  With nominals the
collapse is complete: in variant~3 the axioms force exactly one world (Section~\ref{sec:v3}).

Second, the other question --- whether a result is one of the modal logic --- is settled for
the whole development the three questions come from, in both systems: every statement the
\emph{Notes} prove has a proof that substitutes nothing outside the object language, the port's
in \lean{} and, in \isa{}, the \emph{Notes}' own proof or, where the detector does not certify
it, one written for the purpose.\footnote{Of the 427 original proofs themselves, read in full, 403
are certified and none instantiates a nominal; the 24 that are not carry a term the prover
substituted, not one the argument needs: fourteen a Skolem function of a world that the \tc{smt}
call which found \tc{L} of variant~2 introduced, in two quantifier settings, and ten a constant of
HOL's library that \tc{auto} or \tc{blast} put where the argument has an equality, the universal
property or nothing at all.  What the detector can read depends on how a proof was made, not on the
system it was made in; Appendix~\ref{app:runs} states what each certificate does not cover.}

\section{The embedding and the three developments}
\label{sec:setting}

The embedding follows~\cite{J75} exactly and is only summarised.  Worlds have type $i$,
individuals type $e$; a modal proposition is a predicate $\sigma = i\to o$ on worlds, a
modal property a function $\tau = e\to\sigma$. The connectives are lifted pointwise, among
them $\vee^{e}$ for exclusive disjunction and ${\sim}\varphi$ for the complement of a
property, and $\Box\varphi$ holds at $w$ iff $\varphi$ holds at every $v$ with $w\,r\,v$.
Quantifiers
over individuals come in a possibilist form $\forall x$ and an actualist form
$\forall\actE x$, over the individuals that exist at the world of evaluation.  The
notation $\val{\varphi}$ says that $\varphi$ holds at every world, and all axioms are
stated under $\val{\cdot}$.  Logic \textsf{S4} postulates that $r$ is reflexive and
transitive; whether it suffices where~\cite{J75} used symmetry is where the open questions are
posed, and that is why it is the logic here.  Nothing else is assumed about $r$; where a result below is said
to hold in \textsf{K}, its proof uses neither reflexivity nor transitivity.

Three developments are considered, each in \textsf{S4}; a fourth, the mixed-quantifier
setting of variant~2 in \textsf{K}, is introduced in Section~\ref{sec:audit} and returns in
Sections~\ref{sec:ax1gentwo} and~\ref{sec:discussion}, where the quantifier settings are compared.  Figure numbers in the
text are those of~\cite{J75}.

\paragraph{Variant 2 (\cite[Fig.~7]{J75}).}  A predicate $\Pos$ on properties
(``positive'').  Axioms: \tc{Ax1},
$\val{\Pos\varphi \wedge \Pos\psi \supset \Pos(\lambda x.\ \varphi x \wedge \psi x)}$;
\tc{Ax2a}, $\val{\Pos\varphi \vee^{e} \Pos{\sim}\varphi}$;
\tc{Ax2b}, $\val{\Pos\varphi\supset\Box\Pos\varphi}$;
\tc{Ax3}, $\val{\Pos\,E}$; \tc{Ax4}, $\val{\Pos\varphi \wedge (\varphi\supset_N\psi)
\supset \Pos\psi}$; and the generalised conjunction axiom
\[
\tc{Ax1Gen}:\qquad \val{(\mathit{PosProps}\,\Phi \wedge
  \mathit{ConjOfPropsFrom}\,\varphi\,\Phi) \supset \Pos\varphi},
\]
where $\Phi$ is a (world-dependent) set of properties, $\mathit{PosProps}\,\Phi$ says
that every member of $\Phi$ is positive, and $\mathit{ConjOfPropsFrom}\,\varphi\,\Phi =
\Box(\forall\actE z.\ \varphi z \leftrightarrow \forall\psi.\ \Phi\psi\supset\psi z)$
says that $\varphi$ is necessarily the conjunction of $\Phi$.  This axiom formalises
G\"odel's footnote on ``any number of summands'', following Anderson and
Gettings~\cite{AndersonGettings1996} and Fitting~\cite{Fitting2002}; a related proposal
occurs in G\"odel's Nachlass~\cite[p.~1019]{KanckosLethen2021}.  Definitions: $G\,x =
\forall\varphi.\ \Pos\varphi\supset\varphi x$ (God-like); $\varphi\supset_N\psi =
\Box(\forall\actE y.\ \varphi y\supset\psi y)$ (necessary property inclusion);
$\varphi\Ess x = \varphi x \wedge \forall\psi.\ \psi x \supset (\varphi\supset_N\psi)$
(essence, with the conjunct $\varphi x$ that repairs G\"odel's 1970 definition);
$E\,x = \forall\varphi.\ \varphi\Ess x\supset\Box\exists\actE y.\ \varphi y$ (necessary
existence).

\paragraph{Variant 3 (\cite[Fig.~8]{J75}).}  As variant 2, but essence is G\"odel's
original $\varphi\Ess x = \forall\psi.\ \psi x\supset(\varphi\supset_N\psi)$ and the
repair is made in the inclusion instead: $\varphi\supset_N\psi = \Box(\varphi\neq
(\lambda x.\bot) \wedge \forall\actE y.\ \varphi y\supset\psi y)$, the including property
being required to be non-empty.

\paragraph{Scott's variant with \tc{Ax1Gen}.}  Scott's axioms~\cite[Fig.~12]{J75}:
\tc{A1}, $\val{\neg\Pos\varphi\leftrightarrow\Pos{\sim}\varphi}$; \tc{A2},
$\val{\Pos\varphi\wedge\Box(\forall\actE y.\ \varphi y\supset\psi y)\supset\Pos\psi}$;
\tc{A4}, $\val{\Pos\varphi\supset\Box\Pos\varphi}$; \tc{A5}, $\val{\Pos\,\NE}$; and
\tc{Ax1Gen} in place of \tc{A3}, $\val{\Pos\,G}$, which then follows (Lemma \tc{L}).
Essence and necessary existence are as in variant~2.

In all three, the \emph{Notes} prove the possible existence of a God-like being
(\tc{Th4}, resp.\ \tc{Coro}), that a God-like being has the essence $G$ (\tc{Th1}/\tc{T2})
and that its existence is necessary (\tc{Th2}); none of these uses a condition on $r$.  From
the last two, with symmetry, they prove \tc{Th3}:
$\val{\Diamond\exists\actE x.\,G x \supset \Box\exists\actE y.\,G y}$, hence
\tc{Th5}/\tc{T3}: $\val{\Box\exists\actE x.\,G x}$, and modal collapse \tc{MC}:
$\val{\varphi\supset\Box\varphi}$.  Their theories are set in \textsf{S5}, but symmetry is
the only frame condition these proofs use.  In \textsf{S4}, \tc{Th3} is the open question.

\section{Hybrid witnesses}
\label{sec:criterion}

Write $\mathcal{L}$ for the modal object language of the \emph{Notes}: terms built from
the lifted connectives, $\Box$, $\Diamond$, the possibilist and actualist quantifiers, the
constants the theory in question declares --- $\Pos$, $G$ and its other defined notions ---
and bound variables of the types $e$, $\sigma$, $\tau$, \dots.  The connectives are the
same in every theory; only the signature changes from one variant to the next, so
$\mathcal{L}$-definability is relative to the theory the proof under examination belongs
to.  Among the connectives is the primitive equality $=$ of the \emph{Notes}, declared at every
type: at the type of individuals it is the identity of the modal model's domain, at a lifted
type it is identity across all worlds, so that $\varphi = \top$ is the universal modality
$\mathsf{A}\varphi$ and $\mathcal{L}$ is the language of \textsf{K}, or \textsf{S4}, with the
global modality --- the language the \emph{Notes} write their axioms in, the non-emptiness
clause of Fig.~8, $\varphi \neq \bot$, among them.  What the global modality cannot express is
what this article is about: a nominal names a world, and no formula of $\mathcal{L}$, with or
without $\mathsf{A}$, does.  In the embedding every term of $\mathcal{L}$ is a HOL term,
but not conversely.  The HOL terms $\lambda u{:}i.\ u = w$, $\lambda z\,u.\ z = x \wedge u
= w$ and $\lambda x\,u.\ w\,r\,u$ are properties too, and they are not
$\mathcal{L}$-definable: they are the nominal $\mathsf{w}$ of hybrid logic, the formula
$\mathsf{w}\wedge(z = x)$, and $\Diamond^{-1}\mathsf{w}$, that nominal under the converse
modality of tense logic.  Meta-level
reasoning about worlds, on the other hand --- ``fix $w$, let $v$ be a successor'', a case
split on whether some world is a dead end --- is nothing but Kripke semantics and
implicates no hybrid expressiveness.  The line therefore has to be drawn at
\emph{instantiations}:

\begin{quote}
A proof is \emph{hybrid-free} if every term it substitutes for a property or proposition
variable is a term of $\mathcal{L}$.  Such a substitution is made in an axiom instance, in the
quantifier of God-likeness, of the essence or of necessary existence, in the set $\Phi$ of
\tc{Ax1Gen}, and in a lemma or a hypothesis applied to a property or a proposition ---
wherever a variable of those types is instantiated.
\end{quote}

Variables of other types are not at issue, and the reason is what an object-language reading of
the axioms restricts: the domains of the higher types they quantify over.  A witness those
domains need not contain changes what the axiom says, whereas an individual witness is an
element of the modal model's own domain however a proof happens to name it.
Section~\ref{sec:discussion} states the semantics this answers to.

In both systems these substitutions can be read off a proof term.  What differs is the
retrieval: \isa{} keeps proof terms under the option \tc{record\_proofs = 2}, and the recorded
proof omits the instantiated terms, so it is reconstructed first; in \lean{} they stand in the
term as it is.  Neither arrangement is the norm from which the other departs --- recording is
optional in the one because it is expensive, and the audit below pays that cost without
difficulty.  In such a term, axioms,
theorems and hypotheses are applied to arguments, and the instantiations sought are those
arguments of world-lifted type, that is of type $\sigma$, $\tau$, $(e\to\sigma)\to\sigma$ and so on,
together with the terms a \tc{let} binds at those types.  Bare variables among them are not
witnesses and are passed over; what is left is a $\lambda$-abstraction such as
$\lambda x.\top$, or a closed term such as $G$ or ${\sim}\varphi$.  A detector reads them off and checks each
against the grammar of $\mathcal{L}$: after every defined notion other than a connective of the
embedding is unfolded, the term has to be built from the connectives --- a polymorphic one, the
quantifiers and the equalities, not instantiated at the type of worlds --- from the raw
propositional connectives, equality and quantification at a type other than worlds and the bottom
and top of a lifted type, which is what the connectives unfold to, from the constants the
developments declare, and from variables bound in the term or, when bound outside it, of
individual or lifted type.  Whatever falls outside is
flagged, and the flag says why: an equation between worlds (\textsc{nominal}), the accessibility
relation (\textsc{access}), a world variable bound outside the term, i.e.\ a satisfaction operator
(\textsc{rigid}), a quantifier over worlds, raw or in the embedding's notation (\textsc{worldq}) --- flagged
as a form, conservatively: without $r$, a world equation or a free world variable such a term
is expressible with $\mathsf{A}$ and its dual, but the detector does not translate it --- and a head
the grammar does not know, a choice
operator or a variable of a type neither individual nor lifted, such as a Skolem function of a
world (\textsc{foreign}).  The check is syntactic, and that a term it accepts is a term of
$\mathcal{L}$ holds by the construction of the grammar, not by a list of patterns a term must
avoid.  The \lean{} detector unfolds the definitions inside a witness; the \isa{} one scans the
right-hand sides of the audited theories' definitions and abbreviations by the same grammar and
carries a flag on a definition into every witness that uses its constant --- in the theories of the
\emph{Notes} none is flagged.  Both detectors are checked, on every run, against a suite of
witnesses that must be flagged and witnesses that must not be.
Appendix~\ref{app:detector} describes the two implementations, one for each system, and what
each of them has to arrange before it can read a proof.

\section{A God-like being exists at every world}
\label{sec:gex}

\tc{Ax1Gen} may be applied to a set $\Phi$ of properties that depends on the world of
evaluation.  A suitable $\Phi$ then makes the property ``no actual God-like being exists''
necessarily the conjunction of $\Phi$, with $\Phi$ empty at exactly the worlds where that
property holds --- so that the axiom applies there with nothing to check.

\begin{lemma}[\tc{L2}]
\label{lem:l2}
Let $\chi = \lambda z.\ \neg\exists\actE y.\,G y$ (a property constant in $z$) and
\[
\Phi = \lambda\psi.\ (\exists\actE y.\,G y) \wedge
   \bigl(\Pos\psi \vee \Box(\forall z.\ \psi z \leftrightarrow \neg G z)\bigr).
\]
Then $\val{\mathit{ConjOfPropsFrom}\,\chi\,\Phi}$.  The proof uses only Lemma
\tc{L}, $\val{\Pos\,G}$, as the \lean{} module records; the \isa{} proof is a \tc{metis} call
that is also handed \tc{Ax2a} and does not need it.
\end{lemma}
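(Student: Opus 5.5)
The plan is to unfold $\val{\mathit{ConjOfPropsFrom}\,\chi\,\Phi}$ down to the semantics and argue pointwise. Fix a world $w$, a successor $v$ with $w\,r\,v$, and an individual $z$ existing at $v$. It then suffices to show
\[
\chi\,z\,v \;\leftrightarrow\; \forall\psi.\ \Phi\,\psi\,v\supset\psi\,z\,v .
\]
Nothing about $r$ is used, so the argument holds in \textsf{K}, as the statement implies. Since $\chi$ is constant in $z$, the left side is just $\neg\exists\actE y.\,G\,y$ evaluated at $v$. I split on whether an actual God-like being exists at $v$. This is a meta-level case split on a world, not a substitution, so it raises no hybrid issue.

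\emph{Case 1: no actual God-like being at $v$.} Then $\chi\,z\,v$ holds. Every $\Phi\,\psi\,v$ fails, because its first conjunct $\exists\actE y.\,G\,y$ is false at $v$. So the right side holds vacuously, and both sides are true. This is the ``empty conjunction'' case the section is built around.

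\emph{Case 2: some actual God-like being at $v$.} Then $\chi\,z\,v$ is false, and I must exhibit a $\psi$ with $\Phi\,\psi\,v$ and $\neg\psi\,z\,v$. The first conjunct of $\Phi\,\psi\,v$ now holds for every $\psi$, so only the disjunction needs checking. I split once more on $G\,z\,v$:
\begin{itemize}
\item If $G\,z\,v$, take $\psi={\sim}G$. The second disjunct $\Box(\forall z.\ {\sim}G\,z\leftrightarrow\neg G\,z)$ holds trivially, since ${\sim}G$ unfolds to $\lambda z.\,\neg G\,z$. Moreover $\psi\,z\,v$ is false.
\item If $\neg G\,z\,v$, take $\psi=G$. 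The first disjunct $\Pos\,G$ holds at $v$ by Lemma \tc{L}, $\val{\Pos\,G}$. Moreover $\psi\,z\,v = G\,z\,v$ is false.
\end{itemize}
In both subcases the right side fails, matching the left.

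The only axiom-level fact used is \tc{L}; \tc{Ax2a} plays no role. The two witnesses substituted for $\psi$ are $G$ and ${\sim}G$, both terms of $\mathcal{L}$, so the proof is hybrid-free in the sense of Section~\ref{sec:criterion}.

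The only delicate point is Case 2. There the right-hand side must be falsified for \emph{every} actual $z$, including the God-like ones, which $\Pos$-witnesses alone cannot refute. The second disjunct of $\Phi$, with ${\sim}G$ as its witness, exists precisely to handle them. I would first check carefully that the boxed biconditional is discharged definitionally, without appeal to reflexivity.
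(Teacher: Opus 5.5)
Your proof is correct and is the paper's own argument: the same case split on whether an actual God-like being exists at $v$, with $\Phi$ empty and the conjunction vacuous in the first case, and $G$ (positive by \tc{L}) and ${\sim}G$ (via the second disjunct) as the refuting members in the second. Your sub-split on $G\,z\,v$ only spells out what the paper compresses into ``their conjunction fails of every $z$''.
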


\begin{proof}
Fix a world $v$ and an actual individual $z$ at $v$.  If no actual God-like being exists
at $v$, then $\chi z$ holds at $v$ and $\Phi$ is empty at $v$, so the conjunction of $\Phi$
holds of $z$ vacuously.  If an actual God-like being exists at $v$, then $\chi z$ fails,
and $\Phi$ contains both $G$, positive by \tc{L}, and ${\sim}G$, which satisfies the second
disjunct because ${\sim}G\,z$ is $\neg G z$ by definition; their conjunction fails of every $z$.
\end{proof}

\begin{theorem}[\tc{G\_ex}]
\label{thm:gex}
In variant 2, $\val{\exists\actE x.\,G x}$: at every world an actual God-like being
exists.  The proof uses \tc{Ax1Gen}, \tc{Ax2a} and \tc{Ax4}, and no frame condition.
\end{theorem}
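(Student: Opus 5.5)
The plan is to apply \tc{Ax1Gen} to the set $\Phi$ and the property $\chi$ of Lemma~\ref{lem:l2}, so that $\chi$ (``no actual God-like being exists'') is positive at every world, and then to derive a contradiction from its positivity at any world where no actual God-like being exists.

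First I check $\mathit{PosProps}\,\Phi$ at an arbitrary world $v$. If no actual God-like being exists at $v$, then $\Phi$ is empty there and the claim is vacuous. Otherwise every $\psi$ with $\Phi\psi$ at $v$ satisfies either $\Pos\psi$, which is immediate, or $\Box(\forall z.\ \psi z\leftrightarrow\neg G z)$. In the second case I want $\Pos\psi$. By \tc{L}, $\Pos G$ holds, so \tc{Ax2a} gives $\neg\Pos{\sim}G$. That is the wrong direction, since it shows ${\sim}G$ is \emph{not} positive. So $\mathit{PosProps}\,\Phi$ fails at worlds where a God-like being exists. At those worlds, however, the conclusion $\exists\actE x.\,G x$ already holds.

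This gives the structure of the argument: a case split on the world $w$. If $\exists\actE y.\,G y$ holds at $w$, we are done. If not, then $\Phi$ is empty at $w$, so $\mathit{PosProps}\,\Phi$ holds at $w$ vacuously. Lemma~\ref{lem:l2} gives $\mathit{ConjOfPropsFrom}\,\chi\,\Phi$ at $w$, and since \tc{Ax1Gen} is stated pointwise under $\val{\cdot}$, it yields $\Pos\chi$ at $w$. The key observation is that \tc{Ax1Gen} is an implication evaluated at each world, so the antecedent is needed only at $w$.

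Next I derive a contradiction from $\Pos\chi$ at $w$. I use \tc{Ax4} with $\psi$ chosen as an empty property. Every actual $y$ at every accessible world that satisfies $\chi$ satisfies $\chi$ itself. A better route is to show that $\chi$ entails $G$'s complement or some property clashing with a positive one. I would try to show $\chi\supset_N{\sim}G$, which is immediate: if $\chi y$ holds, there is no actual God-like being, so the actual individual $y$ is not God-like. \tc{Ax4} then gives $\Pos{\sim}G$ at $w$. Together with $\Pos G$ from \tc{L}, this contradicts \tc{Ax2a}. Hence the second case is impossible, and $\exists\actE x.\,G x$ holds at $w$.

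No frame condition is used anywhere, since $\supset_N$ is verified at all accessible worlds uniformly. I expect the main obstacle to be the pointwise bookkeeping. Both $\Phi$ and $\chi$ are world-dependent, and $\mathit{PosProps}\,\Phi$ must be read at $w$ only, while $\mathit{ConjOfPropsFrom}$ needs all worlds, which Lemma~\ref{lem:l2} supplies. A further check is that \tc{L} ($\Pos G$) is available in variant~2 from \tc{Ax1Gen}, as the statement of Lemma~\ref{lem:l2} assumes. If \tc{L} itself required \tc{Ax1} instead, I would cite it as given earlier.
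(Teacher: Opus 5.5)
Your proposal is correct and is essentially the paper's own proof. If no actual God-like being exists at $w$, then $\Phi$ of Lemma~\ref{lem:l2} is empty there, so $\mathit{PosProps}\,\Phi$ holds vacuously at $w$ and \tc{Ax1Gen} (with Lemma~\ref{lem:l2}) makes $\chi$ positive at $w$; then $\chi\supset_N{\sim}G$ and \tc{Ax4} give $\Pos{\sim}G$, contradicting \tc{L} and \tc{Ax2a}. You rightly note that the antecedent of \tc{Ax1Gen} is needed only at $w$, and that the validity of Lemma~\ref{lem:l2} covers the $\Box$ in $\mathit{ConjOfPropsFrom}$; the abandoned detour through an empty property can simply be dropped.
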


\begin{proof}
Suppose no actual God-like being exists at $w$.  Then $\Phi$ of Lemma~\ref{lem:l2} is
empty at $w$, so $\mathit{PosProps}\,\Phi$ holds at $w$ vacuously, and by
Lemma~\ref{lem:l2} and \tc{Ax1Gen}, $\chi$ is positive at $w$.  Now $\chi\supset_N{\sim}G$
holds at $w$: at any successor $v$, if no actual God-like being exists at $v$ then no
actual $y$ is God-like at $v$.  By \tc{Ax4}, ${\sim}G$ is positive at $w$.  But $G$ is
positive by \tc{L}, contradicting the exclusivity in \tc{Ax2a}.
\end{proof}

The same proof, with \tc{A2} for \tc{Ax4} and \tc{A1} for \tc{Ax2a}, gives:

\begin{theorem}
\label{thm:gex-scott}
In Scott's variant with \tc{Ax1Gen}, $\val{\exists\actE x.\,G x}$, from \tc{Ax1Gen},
\tc{A1} and \tc{A2}, with no frame condition.
\end{theorem}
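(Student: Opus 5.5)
The plan is to replay the proof of Theorem~\ref{thm:gex} step by step, substituting Scott's axioms for those of variant~2. In Scott's variant \tc{Ax1Gen} has the same form as in variant~2, essence is defined as in variant~2, and $G$ is defined as before. First I need $\val{\Pos\,G}$, Lemma~\tc{L}, in Scott's setting. The \emph{Notes} derive \tc{L} from \tc{Ax1Gen} applied to $\Phi=\Pos$: $G$ is then necessarily the conjunction of all positive properties, so $\mathit{ConjOfPropsFrom}\,G\,\Pos$ holds by unfolding $G$, and $\mathit{PosProps}\,\Pos$ holds trivially. That derivation uses no Scott axiom and no frame condition, so I take \tc{L} as available. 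Lemma~\ref{lem:l2} also carries over unchanged, since its proof uses only \tc{L} and the definitions of $\chi$, $\Phi$ and ${\sim}$.

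Next comes the main argument. Fix a world $w$ and suppose, for contradiction, that no actual God-like being exists at $w$. Then $\Phi$ is empty at $w$, because each member requires $\exists\actE y.\,G y$. Hence $\mathit{PosProps}\,\Phi$ holds at $w$ vacuously. With $\mathit{ConjOfPropsFrom}\,\chi\,\Phi$ from Lemma~\ref{lem:l2}, \tc{Ax1Gen} gives $\Pos\chi$ at $w$.

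I then check $\Box(\forall\actE y.\ \chi y\supset{\sim}G\,y)$ at $w$. At any successor $v$, $\chi y$ says that no actual God-like being exists at $v$, so an actual $y$ is not God-like at $v$. This is exactly the proviso of \tc{A2}, which is literally \tc{Ax4} with $\supset_N$ unfolded. So \tc{A2} yields $\Pos{\sim}G$ at $w$.

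Finally, \tc{A1} read right to left turns $\Pos{\sim}G$ into $\neg\Pos G$ at $w$. This contradicts \tc{L}. The contradiction shows that an actual God-like being exists at every world, and no property of $r$ was used anywhere.

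The only step needing care is making sure that \tc{L} in Scott's variant does not itself depend on \tc{A3}, which is absent here, or on some frame condition. In particular, \tc{ConjOfPropsFrom} quantifies actualistically inside a $\Box$, and it must match the definition of $G$ at all successor worlds, which it does since $G$ is defined pointwise. For hybrid-freeness, the witnesses are $\chi$, $\Phi$, ${\sim}G$ and $G$. All of them are built from lifted connectives, the actualist quantifier, $\Box$ and the constants $\Pos$ and $G$, so they lie in $\mathcal{L}$.
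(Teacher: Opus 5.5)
Your proposal is correct and takes the paper's route: the paper proves this theorem by rerunning the proof of Theorem~\ref{thm:gex} with \tc{A2} in place of \tc{Ax4} and \tc{A1} in place of \tc{Ax2a}, with \tc{L} coming from \tc{Ax1Gen} at $\Phi:=\Pos$ as you describe. Your explicit check that \tc{L} needs neither \tc{A3} nor any frame condition is the one step the paper leaves implicit, and you have it right.
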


Everything the \emph{Notes} ask about follows.

\begin{theorem}
\label{thm:consequences}
In variant 2 and in Scott's variant with \tc{Ax1Gen}:
\begin{enumerate}
\item[(a)] \tc{Th3}: $\val{\Diamond\exists\actE x.\,G x\supset\Box\exists\actE y.\,G y}$,
  and \tc{Th5}/\tc{T3}: $\val{\Box\exists\actE x.\,G x}$ --- with no frame condition.
\item[(b)] \tc{MC}: $\val{\varphi\supset\Box\varphi}$ for every $\varphi$ --- with no frame
  condition (the proof uses, besides Theorem~\ref{thm:gex}, only \tc{Th1}, resp.\ \tc{T2}).
\item[(c)] \tc{OneWorld}: if $w\,r\,v$ then $v = w$.
\end{enumerate}
\end{theorem}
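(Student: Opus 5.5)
Parts (a) and (b) come from instantiating Theorem~\ref{thm:gex}, and Theorem~\ref{thm:gex-scott} for Scott's variant, together with the essence result \tc{Th1}, resp.\ \tc{T2}. Part (c) then comes from instantiating (b) once more. No frame condition is used anywhere, because the usual appeal to symmetry is replaced by an appeal to the existence result at the successor world itself.

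For (a), Theorem~\ref{thm:gex} gives $\val{\exists\actE x.\,G x}$, so $\exists\actE x.\,G x$ holds at every world, in particular at every successor of any $w$. Hence $\val{\Box\exists\actE y.\,G y}$, which is \tc{Th5}/\tc{T3}. \tc{Th3} follows at once, since its consequent holds unconditionally.

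For (b), fix $w$ and a proposition $\varphi$ with $\varphi\,w$. By Theorem~\ref{thm:gex} there is an actual $x$ at $w$ with $G x$. By \tc{Th1} (resp.\ \tc{T2}), $G \Ess x$ holds at $w$, so for every property $\psi$ with $\psi x$ at $w$ we get $G\supset_N\psi$.
\begin{itemize}
\item Instantiate $\psi := \lambda y.\,\varphi$, the property constant in $y$. This is a term of $\mathcal{L}$, built from a bound proposition variable, so this step is hybrid-free.
\item Since $\varphi$ holds at $w$, $\psi x$ holds at $w$, and so $\Box(\forall\actE y.\ G y\supset\varphi)$ holds at $w$.
\item Take any $v$ with $w\,r\,v$. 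Theorem~\ref{thm:gex} at $v$ supplies an actual God-like $y$ there, so $\varphi$ holds at $v$.
\end{itemize}
Thus $\Box\varphi$ holds at $w$. Only Theorem~\ref{thm:gex} and \tc{Th1}/\tc{T2} are used, as claimed. The only thing to check is that the actualist quantifiers match: the witness $x$ must be actual at $w$ for the essence to apply, and Theorem~\ref{thm:gex} delivers exactly that.

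For (c), (b) holds in the embedding for every $\varphi$ of type $\sigma$, so instantiate it at $w$ with the nominal $\lambda u.\ u = w$. This proposition is true at $w$, hence true at every $v$ with $w\,r\,v$, i.e.\ $v = w$. This is the step I expect to need comment rather than effort. The instantiation is precisely a \textsc{nominal} witness in the sense of Section~\ref{sec:criterion}, so (c) is a theorem of the embedding but not, by this proof, of the modal logic. I would say so explicitly rather than look for a hybrid-free route, since no formula of $\mathcal{L}$ names a world. For (a) and (b), by contrast, the argument is hybrid-free apart from whatever Lemma~\ref{lem:l2} and Theorem~\ref{thm:gex} already substitute.
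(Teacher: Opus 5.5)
Your proposal is correct and follows the paper's proof step for step. Part (a) is read off from \tc{G\_ex} holding at every world. Part (b) instantiates the essence clause of \tc{Th1}/\tc{T2} at the constant property $\lambda y.\varphi$ and uses \tc{G\_ex} again at the successor to supply an actual God-like being. Part (c) applies (b) to the nominal $\lambda u.\ u = w$, and your remark that (c) is hybrid while (a) and (b) are not matches the paper's own assessment.
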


\begin{proof}
(a) is immediate: the consequent of \tc{Th3} holds at every world by
Theorem~\ref{thm:gex}.  (b) Let $\varphi$ hold at $w$ and $w\,r\,v$.  By
Theorem~\ref{thm:gex} some $g$ is God-like at $w$, and by \tc{Th1} $G$ is an essence of
$g$; since the constant property $\lambda y.\varphi$ holds of $g$ at $w$, it is
necessarily implied by $G$, so at $v$ every actual God-like being has it --- and by
Theorem~\ref{thm:gex} there is one.  (c) Apply (b) to the nominal $\lambda u.\ u = w$.
\end{proof}

The proofs of Theorems~\ref{thm:gex}--\ref{thm:consequences}(b) are hybrid-free (the
detector flags nothing in them; the only thing exploited is that the set $\Phi$ may depend
on the world).  Theorem~\ref{thm:consequences}(c) is the frame correspondent of (b), read
off inside the embedding with that nominal; it is hybrid, and is stated
for what it says about the models.  Questions~1 and~3 are thereby answered as the
\emph{Notes} pose them: \tc{Th3} holds in \textsf{S4}, and indeed in \textsf{K}, and so
does \tc{T3}; since the proof uses no frame condition, no countermodel exists in any normal
modal logic.  This is why the \textsf{S4} countermodels that Nitpick had found for Scott's
variant with \tc{A3} could not be reproduced once \tc{A3} was replaced by \tc{Ax1Gen}.

It should not be left there.  The proof of Theorem~\ref{thm:gex} turns on the case in which
$\Phi$ is empty, where its conjunction is the universal property; the case in which $\Phi$
has two members does the rest.  The empty conjunction is an instance of \tc{Ax1Gen} that
G\"odel's ``any number of summands'' need not license, and
Section~\ref{sec:emendation} shows it to be the instance that makes the axiom too strong:
together with the singleton it yields the possible existence of a God-like being, which is
what \tc{Ax4} is for.  Read strictly, then, this section establishes something about
\tc{Ax1Gen} rather than about the questions, and the answers have to be given again under
an axiom that excludes the degenerate instances.  Section~\ref{sec:v3} first follows the same
argument into variant~3, where it reaches further still;
Section~\ref{sec:ax1gentwo} then gives the answers again.

\section{Variant 3: there is exactly one world}
\label{sec:v3}

In the variant with modified property inclusion the argument of Section~\ref{sec:gex} goes
through as well.  The inclusion $\chi\supset_N{\sim}G$ now also requires $\chi$ to be
non-empty, which the assumption that no actual God-like being exists provides.  So \tc{Th3},
\tc{Th5} and \tc{MC} hold hybrid-free and frame-free here too (\tc{Th3\_K},
\tc{Th5\_K}, \tc{MC\_K} in the sources).  This answers the \tc{Th3} part of question~2.
But here a much stronger statement holds, and it comes from the definitions rather than
from the axioms --- at the price of nominals.

Two layers are involved, and they are worth keeping apart.  With a nominal witness, the
definitions of variant~3 alone --- no conjunction axiom --- leave the axioms exactly one world,
which is why an essence there can only be ``being $x$''.  Without one, and with no frame
condition, the two uniqueness lemmas the \emph{Notes} leave open are proved outright: that is the answer to
question~2, and the collapse explains it rather than establishing it.  What follows proves both;
Section~\ref{sec:emendation} builds on the statements.

\begin{theorem}[\tc{NoNecExist}]
\label{thm:nonecexist}
In variant 3, if $v\neq w$ then $\neg E\,x\,w$ for every $x$: if there are two distinct
worlds, nothing has necessary existence.  The proof uses reflexivity and no axiom, and it is hybrid: its two witnesses are ``being
$x$ at $w$'' and ``being at $v$''.
\end{theorem}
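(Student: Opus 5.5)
The plan is to refute $E\,x\,w$ directly from the definitions of variant~3. We exhibit a property $\varphi$ that is an essence of $x$ at $w$ in G\"odel's original sense, yet for which $\Box\exists\actE y.\,\varphi y$ fails at $w$. Two facts about the definitions do the work. First, the non-emptiness clause $\varphi\neq(\lambda x.\bot)$ in $\supset_N$ is equality across all worlds, so it is discharged by a single world anywhere at which $\varphi$ holds of something. Second, the inclusion part $\forall\actE y.\,\varphi y\supset\psi y$ is only checked at successors of $w$. Nominal witnesses let us make $\varphi$ non-empty somewhere while controlling exactly where it holds. I would split on whether $w\,r\,v$, using the nominal ``being $x$ at $\cdot$'' in each case.

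\emph{Case $w\,r\,v$.} Take $\varphi = \lambda z\,u.\ z = x \wedge u = w$, ``being $x$ at $w$''.
\begin{itemize}
\item \emph{Essence.} For any $\psi$ with $\psi\,x\,w$, we show $\varphi\supset_N\psi$ at $w$. Non-emptiness holds because $\varphi\,x\,w$. At a successor $u$, if an actual $y$ has $\varphi\,y\,u$, then $y = x$ and $u = w$, so $\psi\,y\,u$ is just $\psi\,x\,w$.
\item \emph{Contradiction.} Hence $\varphi\Ess x$ at $w$, and $E\,x\,w$ yields $\Box\exists\actE y.\,\varphi y$ at $w$. Evaluating at the successor $v$ gives some $y$ with $\varphi\,y\,v$, i.e.\ $v = w$, contradicting $v\neq w$.
\end{itemize}
Reflexivity is not needed in this case.

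\emph{Case $\neg\, w\,r\,v$.} Take $\varphi = \lambda z\,u.\ z = x\wedge u = v$, ``being $x$ at $v$''.
\begin{itemize}
\item \emph{Essence.} Non-emptiness is witnessed globally by $\varphi\,x\,v$. The inclusion $\forall\actE y.\,\varphi y\supset\psi y$ holds vacuously at every successor $u$ of $w$, since $\varphi\,y\,u$ forces $u = v$ and $v$ is not a successor. So $\varphi\Ess x$ at $w$ for every $\psi$, with no hypothesis on $\psi$ used.
\item \emph{Contradiction.} $E\,x\,w$ yields $\Box\exists\actE y.\,\varphi y$ at $w$. By reflexivity $w\,r\,w$, so some $y$ has $\varphi\,y\,w$, i.e.\ $w = v$, again a contradiction.
\end{itemize}
This is where the reflexivity mentioned in the statement enters.

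The case split is meta-level Kripke reasoning and not a hybrid instantiation, in the sense of Section~\ref{sec:criterion}. The substitutions are for the essence variable $\varphi$ in $E$ and for $\psi$ (left arbitrary). They are nominals paired with $z = x$, so the proof is flagged \textsc{nominal}, consistently with the statement.

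The main obstacle is finding a single pair of witnesses that avoids the case split, matching the paper's description (``being $x$ at $w$'' and ``being at $v$''). The step I would examine first is whether ``being at $v$'' can serve to make the inclusion vacuous, or to supply non-emptiness, uniformly. If that does not go through, the two-case argument above already establishes the theorem using only reflexivity and no axiom.
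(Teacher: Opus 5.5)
Your proof is correct and is essentially the paper's: it splits on whether $w\,r\,v$, uses ``being $x$ at $w$'' in the first case and, in the second, a property that is an essence of $x$ vacuously, with reflexivity entering only there. The paper takes ``being at $v$'' in that case where you take ``being $x$ at $v$'', which works equally well; and the ``main obstacle'' you name is not one, since the paper's two witnesses are simply the witnesses of its two cases, not a single pair that avoids the split.
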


\begin{proof}
Suppose $E\,x\,w$ and $v\neq w$.  If $w\,r\,v$, consider the property $\varphi = \lambda
z\,u.\ z = x\wedge u = w$, ``being $x$ at $w$''.  It is non-empty, and it is an essence of
$x$ at $w$: for any $\psi$ with $\psi x w$ and any successor $u$ of $w$, an actual $y$
with $\varphi y u$ is $x$ at $w$, so $\psi y u$.  By $E\,x\,w$, $\varphi$ is necessarily
exemplified by an actual being; at $v$ this gives $v = w$.  If not $w\,r\,v$, consider
$\varphi = \lambda z\,u.\ u = v$, ``being at $v$''.  It is non-empty, and it is an essence
of $x$ at $w$ vacuously, since no successor of $w$ is $v$.  By $E\,x\,w$ and reflexivity,
$\varphi$ is exemplified at $w$, so $w = v$.
\end{proof}

\begin{theorem}
\label{thm:v3}
In variant 3:
\begin{enumerate}
\item[(a)] \tc{Th3} holds, from \tc{Ax3} and reflexivity: a God-like being has necessary
  existence, so by Theorem~\ref{thm:nonecexist} every world is the world where it exists.
\item[(b)] \tc{OneWorld}: any two worlds are equal.  Proof: were $u\neq v$, nothing would
  have necessary existence (Theorem~\ref{thm:nonecexist}), so $E$ would be the empty
  property, which is positive by \tc{Ax3}; by \tc{Ax2a} its complement, the universal
  property, is then not positive.  But by \tc{Ax2a} one of ``being at $u$'' and its
  complement is positive, both are non-empty, and each is necessarily included in the
  universal property, so by \tc{Ax4} the universal property is positive after all.
\item[(c)] \tc{G\_ex}, \tc{Th4}, \tc{Th5} and \tc{MC} hold, and so does \tc{AllExist},
  that is $\val{\forall y.\ y\text{ actually exists}}$: were $y$ not actual at $w$, ``being $y$
  and not actual'' would be a non-empty property and, vacuously, an essence of the
  God-like being at $w$, hence exemplified by an actual being at $w$.
\item[(d)] \tc{EssDetermined}: if $\varphi$ is an essence of $x$ at some world then
  $\varphi = \lambda z\,u.\ z = x$.  Consequently \tc{UniqueEss1},
  $\val{\varphi\Ess x\wedge\psi\Ess x\supset\Box\forall\actE y.\ \varphi y\leftrightarrow
  \psi y}$, and \tc{UniqueEss2}, $\val{\varphi\Ess x\wedge\psi\Ess x\supset
  \Box(\varphi\equiv\psi)}$ with Leibniz equality, both hold.
\end{enumerate}
\end{theorem}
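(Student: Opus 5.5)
I would prove the four parts in order (a), (b), (c), (d), since each later part may lean on \tc{OneWorld}. Theorem~\ref{thm:nonecexist} does the heavy lifting throughout. Every proof here is allowed to be hybrid, because that theorem already is.

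For (a), I take the \emph{Notes}' \tc{Th2}/\tc{Th1} route, which needs no frame condition. A God-like $g$ at $w$ satisfies $E$, because $E$ is positive by \tc{Ax3} and $G\,g$ means having every positive property. So $E\,g\,w$. Then Theorem~\ref{thm:nonecexist}, which uses reflexivity, gives $v = w$ for every world $v$. The consequent of \tc{Th3} therefore holds at every successor, indeed at every world, because the God-like being exists at $w$ and $w$ is the only world.

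For (b), I argue by contradiction and follow the sketch given in the statement. Assume $u\neq v$. By Theorem~\ref{thm:nonecexist}, $E\,x\,w$ fails for all $x,w$, so $E = \lambda x\,w.\bot$ extensionally. \tc{Ax3} then makes the empty property positive, and \tc{Ax2a} makes its complement $\lambda x.\top$ non-positive. Next, \tc{Ax2a} applied to $\lambda z\,t.\ t = u$ makes either it or its complement positive. Both are non-empty: the first holds at $u$, the second at $v$. For the inclusion $\varphi\supset_N\lambda x.\top$ in variant~3, one needs $\Box(\varphi\neq\lambda x.\bot\wedge\ldots)$, which holds since non-emptiness is rigid. \tc{Ax4} then makes $\lambda x.\top$ positive, a contradiction.

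For (c), \tc{OneWorld} makes $r$ total on a single world, so $\Box\varphi\leftrightarrow\varphi$ and \tc{MC} is immediate. \tc{Th4}, $\Diamond\exists\actE x.\,Gx$, I take from the \emph{Notes}' proof. Because that proof needs no frame condition, it yields \tc{G\_ex} under one world, and then \tc{Th5}. For \tc{AllExist}, let $g$ be God-like at $w$ and suppose $y$ is not actual at $w$. Take $\varphi = \lambda z\,t.\ z = y\wedge\neg(z \text{ actual at } t)$, which is non-empty. It is an essence of $g$ vacuously, because no actual individual satisfies it, and the non-emptiness clause holds at the single world. By $E\,g\,w$, some actual being is $\varphi$ at $w$, which contradicts $\varphi$ excluding actual beings.

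For (d), which I expect to be the main obstacle, let $\varphi\Ess x$ at $w$. One direction: apply the essence definition with $\psi = \lambda z\,t.\ z = x$, which $x$ has. This gives $\varphi\supset_N\psi$, so every actual $\varphi$-being is $x$, and by \tc{AllExist} every $\varphi$-being is $x$. The other direction, $\varphi\,x$, is where variant~3's essence lacks the conjunct $\varphi x$. Here I would use the non-emptiness clause of the inclusion to get some $z$ with $\varphi\,z$. That $z$ must equal $x$, so $\varphi\,x$ holds at the one world. By extensionality this gives $\varphi = \lambda z\,u.\ z = x$. \tc{UniqueEss1} and \tc{UniqueEss2} then follow, since both essences equal the same term and Leibniz equality is trivial.
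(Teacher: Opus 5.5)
Your proposal is correct and takes the paper's route: (a), (b) and \tc{AllExist} are its sketches almost verbatim, \tc{MC} is read off \tc{OneWorld} as in the paper's original proof, and your (d) --- the essence clause for $\lambda z\,t.\ z = x$, its non-emptiness conjunct, \tc{AllExist} and \tc{OneWorld} --- draws on exactly the postulates the paper records for \tc{EssDetermined}. The one deviation is that you get \tc{G\_ex} from the \emph{Notes}' \tc{Th4} together with \tc{OneWorld}, whereas the paper obtains it by the argument of Section~\ref{sec:gex} (the empty instance of \tc{Ax1Gen} at a world-dependent $\Phi$); that keeps \tc{G\_ex}, and with it \tc{AllExist}, hybrid-free instead of letting them inherit the nominals of \tc{OneWorld} --- immaterial for the statement as posed, but it is what Theorem~\ref{thm:uniqueess-k} later builds on.
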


The proofs of (a), (b) and (d), and those of \tc{Th5} and \tc{MC} in (c), pass through
Theorem~\ref{thm:nonecexist}, and the detector marks them as hybrid, together with
\tc{OneWorld} of Section~\ref{sec:gex}, whose witness is the nominal $\lambda u.\ u = w$;
\tc{G\_ex}, \tc{Th4} and \tc{AllExist} in (c) are hybrid-free, the instance form \tc{AllExistI}
carrying only the informational flag for the existence predicate.  For the two open lemmas, however, the nominal
can be dispensed with.  In Theorem~\ref{thm:nonecexist} the nominal pins a property to a
world; the following lemma does the same work with a constant property.  Both of them read a
$\Box$ off at some world, which in \textsf{S4} is the world of evaluation itself.  That is the only
service reflexivity does here, and the axioms supply such a world by themselves.

\begin{lemma}[\tc{Serial}]
\label{lem:serial}
$\val{\Diamond\top}$, every world has a successor, from \tc{Ax2a}, \tc{Ax3} and \tc{Ax4}.
\end{lemma}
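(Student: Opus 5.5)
The plan is to argue by contradiction at a single world, exploiting the fact that in a dead end every necessary inclusion holds vacuously, so \tc{Ax4} can push positivity anywhere. Fix a world $w$ and suppose $\neg\Diamond\top$ holds at $w$, that is, $w$ has no $r$-successor. I will not unfold $E$ at all; the only facts about $E$ I need are that it is a term of $\mathcal{L}$ and that \tc{Ax3} makes it positive at every world, in particular at $w$.

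The key step is the inclusion $E\supset_N{\sim}E$ at $w$. In variant~3 this is $\Box(E\neq(\lambda x.\bot)\wedge\forall\actE y.\ E\,y\supset{\sim}E\,y)$. Evaluated at $w$, the $\Box$ quantifies over the successors of $w$, of which there are none, so the inclusion holds at $w$ vacuously. The non-emptiness conjunct added in Fig.~8 sits inside the $\Box$, so it imposes nothing here. Instantiating \tc{Ax4} at $w$ with $\varphi:=E$ and $\psi:={\sim}E$ then yields $\Pos({\sim}E)$ at $w$. Together with $\Pos E$ from \tc{Ax3}, this contradicts the exclusive disjunction of \tc{Ax2a} instantiated with $E$. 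Hence $w$ has a successor. Since $w$ was arbitrary, $\val{\Diamond\top}$ follows, with no frame condition and no axiom beyond \tc{Ax2a}, \tc{Ax3} and \tc{Ax4}.

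Hybrid-freeness should be immediate. The only property terms substituted are $E$ and ${\sim}E$, both built from declared constants and lifted connectives, and the case assumption about $w$ is meta-level Kripke reasoning, not an instantiation. So the detector should flag nothing.

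The one point I expect to need care is the scoping just noted: confirming that in the embedding's definition of $\supset_N$ for variant~3 the clause $\varphi\neq(\lambda x.\bot)$ lies under the $\Box$ rather than outside it. If it lay outside, $E\neq\bot$ would have to be supplied separately, which \tc{Ax3} alone does not give. In that case I would fall back on an argument in the style of Theorem~\ref{thm:v3}(b): pick a property and its complement, use \tc{Ax2a} to make one of them positive, and choose the target of the inclusion so that non-emptiness is witnessed by an individual of the domain, for instance using the universal property $\lambda x.\top$. As quoted in Section~\ref{sec:setting}, however, the clause is inside the $\Box$, and the direct argument stands.
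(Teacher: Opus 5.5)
Your proof is correct and is essentially the paper's argument: at a dead end every necessary inclusion, the Fig.~8 clause under the $\Box$ included, holds vacuously, so \tc{Ax4} carries the positivity of $E$ from \tc{Ax3} onward, contradicting the exclusivity of \tc{Ax2a}. The only difference is the contradicting pair. The paper uses $\lambda x.\top$ and its complement, you use $E$ and ${\sim}E$, which saves one application of \tc{Ax4}. Your fallback paragraph is unnecessary, since the clause does sit inside the $\Box$.
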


\begin{proof}
Let $w$ have no successor.  Every $\Box$ then holds at $w$ vacuously, in particular every
necessary property inclusion, so \tc{Ax4} carries the property that \tc{Ax3} makes positive to an
arbitrary $\psi$: at $w$ every property is positive, $\lambda x.\top$ and its complement among
them, against the exclusivity of \tc{Ax2a}.
\end{proof}

No conjunction axiom enters, and the shape of the inclusion is immaterial: the clause of Fig.~8
sits inside the $\Box$, so at a dead end the whole inclusion is vacuously true, clause included,
and the lemma holds in variant~2 and in variant~3 alike.  The dead end is exploited once more in Section~\ref{sec:audit}, for
\tc{Th4\_K}, there with \tc{Ax1Gen}, whose empty instance makes every property positive; that
route is closed by the restriction of Section~\ref{sec:ax1gentwo} and this one is not, which is
why the lemma is worth stating separately.  In \textsf{S4} it adds nothing, reflexivity giving seriality at once, and the
\emph{Notes} have little occasion for it: of their results, symmetry is the only frame condition
any uses, with one exception --- \tc{Th3} of their \textsf{S4} variant-1 theory, which uses
reflexivity and \tc{Ax3} and nothing else --- while reflexivity and transitivity otherwise occur
only in their two test theories, where the schemes \textsf{D}, \textsf{M}, \textsf{4} and
\textsf{5} are derived from them on purpose.  The use of the lemma is that the hybrid-free results
below, and those of Sections~\ref{sec:ax1gentwo} and~\ref{sec:plain}, need no frame condition
whatever.  It also
removes reflexivity from Theorem~\ref{thm:nonecexist}, at the price of the three axioms, and hence
from Theorem~\ref{thm:v3}(b): variant~3's definitions with \tc{Ax2a}, \tc{Ax3} and \tc{Ax4} force
a single world in \textsf{K}.  The proofs recorded for the \emph{Notes}' own theories state
Theorem~\ref{thm:nonecexist} as it stands, reflexivity and no axiom being the sharper statement; the reflexivity-free ones are in \tc{Ax1GenTwoVariant3Th4}.

\begin{lemma}[\tc{Reach\_K}]
\label{lem:reach}
In variant 3, every non-empty property is exemplified at some successor of every world:
if $\varphi z t$ for some $z$, $t$, then for every $w$ there are $v$ with $w\,r\,v$ and $y$
with $\varphi y v$.  Hybrid-free, and with no frame condition; uses \tc{Ax1Gen}, \tc{Ax2a},
\tc{Ax3} and \tc{Ax4}.
\end{lemma}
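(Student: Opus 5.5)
The plan is a proof by contradiction at a fixed world $w$, following the pattern of Lemma~\ref{lem:serial} and Theorem~\ref{thm:gex}. Suppose $\varphi z t$ holds but no successor $v$ of $w$ has any $y$ with $\varphi y v$. In variant~3 the inclusion $\theta\supset_N\psi$ is $\Box(\theta\neq(\lambda x.\bot)\wedge\forall\actE y.\ \theta y\supset\psi y)$. Its first conjunct is a global statement, and it holds for $\theta=\varphi$ because $\varphi z t$ holds. The second conjunct holds vacuously at every successor of $w$, because $\varphi$ is empty there.

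So, if some property $\theta$ that is non-empty and empty at all successors of $w$ is positive at $w$, then \tc{Ax4} makes every $\psi$ positive at $w$. That includes $\lambda x.\top$ and its complement, against the exclusivity of \tc{Ax2a}. The proof therefore reduces to exhibiting a positive property at $w$ with these two features, using only terms of $\mathcal{L}$.

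\emph{Step 1.} Take the constant property $\chi=\lambda z.\ \exists y.\,\varphi y$, a possibilist quantifier over individuals. This term is in $\mathcal{L}$, and it involves no nominal. It is non-empty, since it holds at $t$, and by hypothesis it is empty at every successor of $w$. So $\chi$ cannot be positive at $w$, and \tc{Ax2a} gives $\Pos{\sim}\chi$ at $w$, where ${\sim}\chi=\lambda z.\ \neg\exists y.\,\varphi y$.

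\emph{Step 2.} Turn the positivity of ${\sim}\chi$ into positivity of a property that is empty at the successors of $w$. This needs \tc{Ax1Gen} with a world-dependent set $\Phi$, as in Lemma~\ref{lem:l2}. Choose $\Phi$ so that $\Phi$ is empty at $w$ under the hypothesis, making $\mathit{PosProps}\,\Phi$ vacuous there. At the successors of $w$, $\Phi$ should contain a property together with its complement, so that the conjunction is empty there. Concretely, I would first try
\[
\Phi=\lambda\psi.\ \bigl(\neg\exists y.\,\varphi y\bigr)\wedge\neg\Box\neg\exists y.\,\varphi y\wedge\bigl(\Pos\psi\vee\Box(\forall z.\ \psi z\leftrightarrow\neg E z)\bigr),
\]
with \tc{Ax3} supplying $\Pos E$ in the role that Lemma \tc{L} plays in Lemma~\ref{lem:l2}. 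The conjunction $\theta$ of this $\Phi$ is then the constant property of the negated guard. I would check $\mathit{ConjOfPropsFrom}\,\theta\,\Phi$ world by world, exactly as in the proof of Lemma~\ref{lem:l2}. The remaining case, $\varphi$ exemplified at $w$ itself, is covered directly by Step~1.

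The main obstacle is Step~2. The guard must hold at $w$ and fail at every successor of $w$, and without transitivity ${\Box}$-formulas evaluated at $w$ say nothing reliable about the successors' own successors. If the guard above fails for that reason, I would fall back on two alternatives. The first is to let $\Phi$ contain only $\Pos$-guarded members, so that $\mathit{PosProps}\,\Phi$ holds at every world. For example, $\Phi=\lambda\psi.\ \Pos\psi\wedge(\psi=\chi\vee\psi={\sim}\chi)$ yields the everywhere-positive conjunction $\lambda z.\ (\Pos\chi\leftrightarrow\exists y.\,\varphi y)$. I would then try to combine this with Step~1 and \tc{Ax4} to reach the contradiction. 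The second alternative is to use Lemma~\ref{lem:serial} to supply a successor at which the $\Box$ can be read off.

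Whichever route works, the last check is that every witness is one of $\chi$, ${\sim}\chi$, $E$, $\lambda x.\top$ and $\Phi$. All of these are built from the connectives, $\Pos$, $E$ and bound variables of individual or lifted type. So the proof is hybrid-free and uses no frame condition.
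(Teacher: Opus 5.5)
There is a genuine gap. Step~2, on which the whole argument rests, is never carried out, and the candidate you give does the opposite of what you need. For the conjunction of your $\Phi$ to be empty at a successor $v$ of $w$, the guard $(\neg\exists y.\,\varphi y)\wedge\Diamond\exists y.\,\varphi y$ has to be true at $v$. That means some successor of $v$ must carry a $\varphi$-thing. In \textsf{K} nothing secures this, and in \textsf{S4} it is excluded: every successor of $v$ is a successor of $w$. So by your hypothesis the guard is false at every successor of $w$, $\Phi$ is empty there, and its conjunction, the negated guard, is the universal property there.

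No other guard can rescue the design at a world with $w\,r\,w$, which in \textsf{S4} is every world. A $\Phi$ that is empty at $w$ and contains a complementary pair at every successor of $w$ would have to be both at $w$. More generally, at a reflexive $w$ any property positive at $w$ is had by the actual God-like being there. So a positive property that is also empty at every successor of $w$ already contradicts Theorem~\ref{thm:gex}, and producing one is no easier than the goal.

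The fallbacks do not close the gap either. The conjunction of $\lambda\psi.\ \Pos\psi\wedge(\psi=\chi\vee\psi={\sim}\chi)$ coincides, by \tc{Ax2b}, with ${\sim}\chi$ at every successor of $w$, where it is universal. Lemma~\ref{lem:serial} only helps once you have a $\Box$-statement at $w$ to read off, and your argument never produces one. Finally, the case in which $\varphi$ is exemplified at $w$ is not ``covered directly by Step~1''. $\Pos{\sim}\chi$ at a world where ${\sim}\chi$ is empty is no contradiction until an actual God-like being at $w$ is brought in.

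What is missing is that in variant~3 the contradiction comes from necessary existence, not from \tc{Ax4}. Your $\chi=\lambda z.\ \exists y.\,\varphi y$ is globally non-empty and empty at every successor of $w$; the paper takes $\lambda x.\ \varphi z$, and either serves. So $\chi\supset_N\psi$ holds at $w$ for every $\psi$, which is the observation you already made. Since G\"odel's essence in variant~3 has no conjunct $\chi x$, $\chi$ is then vacuously an essence at $w$ of every individual.

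The argument closes as follows. Theorem~\ref{thm:gex}, which goes through in variant~3 from \tc{Ax1Gen}, \tc{Ax2a} and \tc{Ax4}, gives an actual God-like $g$ at $w$. By \tc{Ax3}, $E\,g\,w$, hence $\Box\exists\actE y.\ \chi y$ holds at $w$. Lemma~\ref{lem:serial} supplies a successor $v$ of $w$, at which $\chi$, and so $\varphi$, is exemplified, against the hypothesis.

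That is the paper's proof. It needs no world-dependent $\Phi$ beyond the one inside Theorem~\ref{thm:gex}, and it treats a reflexive $w$ like any other. Its witnesses, $\chi$, $E$ and those of Theorem~\ref{thm:gex}, are terms of $\mathcal{L}$.
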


\begin{proof}
Let $g$ be the God-like being at $w$ (Theorem~\ref{thm:gex} for variant~3); by \tc{Ax3},
$E\,g\,w$.  The constant property $\chi = \lambda x.\ \varphi z$ (``$z$ has $\varphi$'') is
non-empty, since $\chi z t$.  If no successor of $w$ had a $\varphi$-thing, $\chi$ would be
empty at every successor of $w$, hence vacuously an essence of $g$ at $w$; then $E\,g\,w$
gives $\Box\exists\actE x.\ \chi x$ at $w$, and at a successor $v$ of $w$
(Lemma~\ref{lem:serial}) some actual $x$ has $\chi$, i.e.\ $\varphi z v$ --- a $\varphi$-thing at
a successor of $w$ after all.
\end{proof}

\begin{theorem}
\label{thm:uniqueess-k}
In variant 3, hybrid-free: (a) an essence of $x$ at $w$ is had by $x$ at $w$
(\tc{EssMemI\_K}); (b) \tc{UniqueEss1}; (c) \tc{UniqueEss2}.
\end{theorem}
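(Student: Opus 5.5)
For (a) I would argue by contradiction using the complement of the essence as the test property. Suppose $\varphi\Ess x$ holds at $w$ (G\"odel's original definition, as in variant~3) but $\neg\varphi\,x\,w$. Then ${\sim}\varphi$ holds of $x$ at $w$, so the essence clause, instantiated with the $\mathcal{L}$-term ${\sim}\varphi$, gives $\varphi\supset_N{\sim}\varphi$ at $w$. Unfolding the Fig.~8 inclusion, at every successor $v$ of $w$ two things hold: $\varphi\neq(\lambda x.\bot)$, and no actual $y$ has $\varphi$ at $v$.

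Lemma~\ref{lem:serial} supplies a successor, and so it turns the first conjunct into a genuine fact: $\varphi$ is non-empty. Lemma~\ref{lem:reach} then yields a successor $v$ of $w$ and some $y$ with $\varphi\,y\,v$. To reach a contradiction I need $y$ to be actual at $v$. For this I invoke \tc{AllExist} of Theorem~\ref{thm:v3}(c), which the paper records as hybrid-free. The witnesses are then only ${\sim}\varphi$ and whatever Lemma~\ref{lem:reach} and \tc{AllExist} substitute internally, and these are already certified.

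For (b), let $\varphi\Ess x$ and $\psi\Ess x$ hold at $w$. By (a), $\psi\,x\,w$ holds. Instantiating the essence clause of $\varphi$ with the bound variable $\psi$, which is not a witness at all, gives $\varphi\supset_N\psi$, hence at every successor every actual $\varphi$-thing is a $\psi$-thing. The symmetric argument gives the converse inclusion, and the two together are $\Box\forall\actE y.\ \varphi y\leftrightarrow\psi y$. No frame condition enters.

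For (c), I would prove the stronger statement that $\varphi$ and $\psi$ coincide at every world and every individual, and then conclude by extensionality, which gives Leibniz equality at every successor. Suppose instead that $\varphi\,z\,t\wedge\neg\psi\,z\,t$ holds for some $z,t$. The constant property $\lambda x.\ \varphi z\wedge\neg\psi z$ is an $\mathcal{L}$-term, with $z$ bound outside it being of individual type, and it is non-empty. So Lemma~\ref{lem:reach} gives a successor $v$ of $w$ at which $\varphi\,z\,v\wedge\neg\psi\,z\,v$ holds. By \tc{AllExist}, $z$ is actual at $v$, which contradicts (b). The case with $\psi$ and $\varphi$ exchanged is symmetric. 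This is exactly the trick of Lemma~\ref{lem:reach}: a constant property replaces the nominal ``being at $t$'', and that is what keeps the proof hybrid-free.

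I expect the main obstacle to be (c), specifically matching the precise form of $\equiv$ in \tc{UniqueEss2}. If $\equiv$ is Leibniz equality of the properties as world-indexed functions, then the global coincidence argument above is needed, and it has to go through without ever naming $t$. If $\equiv$ is only pointwise agreement at the world of evaluation, then (b) together with \tc{AllExist} suffices directly. In either case I must also check that the detector accepts the constant-property witness, with its free individual variable $z$, and does not read it as a satisfaction operator. By the criterion of Section~\ref{sec:criterion} it should accept it.
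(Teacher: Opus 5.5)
Your parts (a) and (c) are the paper's proofs, with the same witnesses: ${\sim}\varphi$, and the constant property $\lambda x.\ \varphi z\wedge\neg\psi z$. The criterion admits the latter, since $z$ is of individual type. You also make explicit the appeal to Lemma~\ref{lem:serial} that the paper leaves tacit when it reads the non-emptiness of $\varphi$ off the inclusion.

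Part (b) takes a different and shorter route, and it is correct. Once (a) gives $\varphi\,x\,w$ and $\psi\,x\,w$, the essence clauses of $\varphi$ and $\psi$, instantiated at each other, give $\varphi\supset_N\psi$ and $\psi\supset_N\varphi$ at $w$. Their second conjuncts are the two directions of \tc{UniqueEss1}. This is the variant-2 argument, with (a) standing in for the conjunct $\varphi x$ of that variant's essence. The paper does three things instead: it instantiates the essence clause of $\varphi$ at Leibniz identity with $x$, so that every actual $\varphi$-thing at a successor is $x$; it carries $\psi\,x\,w$ to that successor by \tc{MC\_K}; and it substitutes with $\lambda z.\ \psi z\supset\psi y$.

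Your version of (b) substitutes nothing and avoids \tc{MC\_K}. So \tc{UniqueEss1} and \tc{UniqueEss2} would rest only on what (a), Lemma~\ref{lem:reach} and \tc{AllExist} use, namely \tc{Ax1Gen}, \tc{Ax2a}, \tc{Ax3} and \tc{Ax4}. They would not need the \tc{Ax2b} that the paper's proofs take in through \tc{MC\_K} (Table~\ref{tab:deps}). The paper's detour buys a stronger intermediate fact: at the successors of $w$ the only actual bearer of an essence of $x$ is $x$ itself. That is a hybrid-free glimpse of \tc{EssDetermined}, and it is the point the text then makes, that uniqueness holds in variant~3 because an essence of $x$ can only be ``being $x$''.
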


\begin{proof}
(a) If $\varphi\Ess x$ at $w$ but not $\varphi x w$, the essence clause for ${\sim}\varphi$
(which $x$ has) gives that $\varphi$ is non-empty and, at every successor of $w$, no actual
thing has $\varphi$; by \tc{AllExist} everything is actual, so $\varphi$ is empty at every
successor, against Lemma~\ref{lem:reach}.  (b) Let $\varphi,\psi$ be essences of $x$ at
$w$, $w\,r\,v$, and $\varphi y v$ for an actual $y$.  The essence clause for
$\lambda z.\ z\equiv x$ gives $y\equiv x$ at $v$.  By (a) $\psi x w$, by \tc{MC\_K}
$\psi x v$, and Leibniz identity (with the object-language property
$\lambda z.\ \psi z\supset\psi y$) gives $\psi y v$; symmetrically.  (c) Were
$\varphi z t\wedge\neg\psi z t$ for some $z$, $t$, the constant property ``$z$ has $\varphi$
and not $\psi$'' would be non-empty, so by Lemma~\ref{lem:reach} exemplified at some
successor $u$ of $w$ --- against (b) at $u$.  Hence $\varphi = \psi$, and
$\Box(\varphi\equiv\psi)$ follows.
\end{proof}

Question~2 is thereby answered in the object language: \tc{Th3}, \tc{UniqueEss1} and
\tc{UniqueEss2} hold in \textsf{S4}, and indeed in \textsf{K} --- the last two as the
\emph{Notes} expected,
though hardly for the reason they expected.  In \textsf{S5} the \emph{Notes} report
\tc{UniqueEss1} proved and \tc{UniqueEss2} refuted by Nitpick for variant~2
(\cite[Fig.~7, line~40]{J75}).  In variant~3 both hold because the modal structure has
collapsed: an essence of $x$ can only be ``being $x$''.  Lemma~\ref{lem:reach} and
Theorem~\ref{thm:v3} show this from two angles.

\paragraph{The possibilist and mixed-quantifier copies.}
\label{sec:copies}
In the possibilist theories Lemma~\ref{lem:l2} and Theorem~\ref{thm:gex} hold as stated, with
$\exists$ for $\exists\actE$; in variant~3 the non-emptiness clause of $\chi\supset_N{\sim}G$
holds because $\chi$ is true at the world where no God-like being is assumed to exist.  So \tc{Th3} holds in \textsf{K}
in both variants.  In the mixed setting \cite[Footnote~20]{J75}, where $\mathit{ConjOfPropsFrom}$
and necessary existence quantify actualistically and the inclusion possibilistically, the same
$\chi$ and $\Phi$, with possibilist $\exists$, still form a conjunction in the sense of
$\mathit{ConjOfPropsFrom}$, and the inclusion admits the step to ${\sim}G$: a God-like being,
possibly not actual, exists at every world.  The argument with $\exists\actE$ does not go through there, since the
possibilist inclusion would require $\chi$ to exclude God-like beings that are not actual.  That
being is all Lemma~\ref{lem:reach} and \tc{MC} need: its necessary existence yields an actual
witness, and Lemma~\ref{lem:serial} the successor.  Theorem~\ref{thm:uniqueess-k} then holds in
both settings with no frame condition, and part~(a) becomes simpler, since with a possibilist
inclusion \tc{AllExist} is not needed.  Neither proof uses \tc{Th4}, which the \emph{Notes}
postulate in both theories; in the mixed setting it is a theorem of \tc{Ax1Gen} and \tc{Ax2a}
alone, by the empty and the singleton instance that Section~\ref{sec:emendation} examines.  All these proofs are hybrid-free.  Resting on \tc{Ax1Gen} as stated, they carry the
reservation recorded at the end of Section~\ref{sec:gex}; under \tc{Ax1GenTwo} the copies are
re-established in sessions \tc{M}--\tc{P}, under the side conditions stated there and for
variant~3 in its emended form (Section~\ref{sec:plain}).

Both sections have taken \tc{Ax1Gen} as the \emph{Notes} state it.  On it rest \tc{G\_ex} and
everything drawn from it, the hybrid-free answers to question~2 included; the frame statements
\tc{NoNecExist} and \tc{OneWorld} of this section need no conjunction axiom at all.  Whether
the axiom may be taken as stated is the question of the next section.

\section{Emendations of the conjunction axiom}
\label{sec:emendation}

The hybrid-free answers of both sections pass through the empty instance of \tc{Ax1Gen}.  What a conjunction axiom should say is therefore the next question, and the
answers have to be given again under a reading that excludes that instance.  Three requirements
on such an axiom:
\begin{enumerate}
\item[(i)] It yields \tc{L}: being God-like is positive.
\item[(ii)] With \tc{Ax2a} alone it does not yield \tc{Th4}, the possible existence of a
  God-like being.  In G\"odel's argument that step belongs to \tc{Ax4} (\tc{Th4} from
  \tc{Ax2a}, \tc{Ax4}, \tc{L} in the \emph{Notes}).  \tc{Ax1Gen} violates this: with
  $\Phi=\emptyset$ the universal property is positive; with $\Phi=\{G\}$ every property
  necessarily coextensive with $G$ is positive, hence the empty property if no God-like being
  is possible; \tc{Ax2a} excludes both.  That derivation is how \tc{Th4} is proved in five modules of
  the \lean{} port --- variant~2 in the actualist and possibilist settings, variant~3 in all
  three.  The mixed-quantifier setting of variant~2 reaches \tc{Th4} through symmetry instead,
  and Scott's variant from \tc{A1}, \tc{A2} and his \tc{A3}.  The requirement is tested in a
  theory whose only axiom is \tc{Ax2a}, each reading of the conjunction axiom entering as a
  hypothesis of the lemma (session \tc{I}).
\item[(iii)] It need not block modal collapse, which comes with the maximality of the
  positive properties~\cite{MuehlenbeckBenzmueller2026,Benzmueller2026comment}.
\end{enumerate}

G\"odel's footnote asks for conjunctions of ``any number of summands''.  Read so as to
include the empty set and the singletons, it gives \tc{Ax1Gen} and fails (ii), and the two
offending instances are exactly those degenerate ones.  The natural restriction is to require
a conjunction to have at least two different conjuncts:
\begin{multline*}
\tc{Ax1GenTwo}:\quad \lfloor(\exists\psi_1\psi_2.\,\Phi\psi_1 \wedge \Phi\psi_2 \wedge
  \psi_1\neq\psi_2\\
  {}\wedge \mathit{PosProps}\,\Phi \wedge \mathit{ConjOfPropsFrom}\,\varphi\,\Phi)
  \supset \Pos\varphi\rfloor.
\end{multline*}
This is the reading adopted here, and it is the only change: \tc{Ax1}, \tc{Ax2a},
\tc{Ax2b}, \tc{Ax3} and \tc{Ax4} stand as the \emph{Notes} state them, and so do the
definitions.  \tc{Ax1} is in fact redundant under it: with
$\Phi=\{\varphi,\psi\}$ the restricted axiom gives the conjunction of a pair, and for
$\varphi=\psi$ that conjunction collapses (\tc{Ax1\_of\_Two}).  The two-conjunct case is what
\tc{Ax1} says; what the restriction removes are the degenerate instances.  Either \tc{Ax1Gen} is left alone, and the answers of
Section~\ref{sec:gex} rest on an instance the footnote need not license, or it is
restricted, and they have to be established again --- which is what this section does.
Two neighbouring readings are kept as controls
(Section~\ref{sec:neighbours}):
\tc{Ax1GenOne}, which excludes only the empty conjunction, and \tc{Ax1GenBox}, which leaves
the cardinality of $\Phi$ alone and requires its members to be necessarily positive instead.
All three are weakenings of \tc{Ax1Gen}.  Positive results below are proofs, in \isa{} and in
\lean{}; negative ones are countermodels Nitpick~\cite{Nitpick} finds, recorded as \tc{expect}
annotations that the build checks --- or, in variant~3, that countermodel written out in \lean{}
and verified by its kernel.

\subsection{\tc{Ax1GenTwo}: at least two different conjuncts}
\label{sec:ax1gentwo}
\label{sec:two}

Under the two-conjunct reading the first thing to recover is \tc{L}, since the derivation of it
in the \emph{Notes} takes $\Phi:=\Pos$ and says nothing about how many members that set has.

\begin{lemma}[\tc{L\_Two}]
\label{lem:ltwo}
In variant 2, \tc{Ax1GenTwo} yields $\val{\Pos G}$.
\end{lemma}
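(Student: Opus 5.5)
The plan is to instantiate \tc{Ax1GenTwo} with $\Phi := \Pos$, as the \emph{Notes} do for \tc{Ax1Gen}, and to meet the new side condition, two different members of $\Phi$, by a case split. The degenerate case, where that condition cannot be met, will be handled separately.

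\emph{Step 1: the conjunction clause.} Fix a world $w$. First I check $\mathit{ConjOfPropsFrom}\,G\,\Pos$ at $w$. At any successor $v$ and for any actual $z$, $G\,z\,v$ holds iff $\forall\psi.\ \Pos\psi\,v\supset\psi z v$, and this is the definition of $G$ unfolded. Since $\Phi$ is $\Pos$ itself, which is world-dependent, the set is read at $v$ exactly as $G$ reads it. $\mathit{PosProps}\,\Pos$ is trivial. So the only remaining premise of \tc{Ax1GenTwo} is that two different properties are positive at $w$.

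\emph{Step 2: a first positive property.} $\lambda x.\top$ is positive at $w$. By \tc{Ax3}, $E$ is positive, and $E\supset_N\lambda x.\top$ holds trivially, so \tc{Ax4} applies. By \tc{Ax2a}, the complement $\lambda x.\bot$ is then not positive.

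\emph{Step 3: the case split (the main obstacle).} Suppose some property $\psi$ differs from both $\lambda x.\top$ and $\lambda x.\bot$. By \tc{Ax2a}, $\psi$ or ${\sim}\psi$ is positive at $w$. Neither of them is $\lambda x.\top$, since ${\sim}\psi = \lambda x.\top$ would force $\psi = \lambda x.\bot$. So we get a positive property different from $\lambda x.\top$. Together with Step~2 this gives the two distinct members, and \tc{Ax1GenTwo} yields $\Pos G$ at $w$.

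Otherwise every property is $\lambda x.\top$ or $\lambda x.\bot$. Then the only property positive at $w$ is $\lambda x.\top$, so $G$ holds of every individual at $w$. Hence $G \neq \lambda x.\bot$, so $G = \lambda x.\top$, which is positive by Step~2.

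This step is where I expect the work to lie. The witness $\psi$ is not a substituted term; the split is a meta-level case distinction on the property domain. So the proof should stay hybrid-free: the only terms substituted for property variables are $\Pos$, $E$, $\lambda x.\top$, $\psi$ or ${\sim}\psi$ (a bound variable or its complement), and $G$. In the mechanisation I would give the disjunction to the prover explicitly, so that it does not introduce a choice term that the detector would flag as \textsc{foreign}.
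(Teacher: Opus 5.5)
Your proposal is correct and follows the paper's proof: you take $\Phi:=\Pos$, get the universal property positive from \tc{Ax3} and \tc{Ax4}, and split on whether a second property is positive at $w$. The differences are cosmetic. You split on the existence of a property other than $\lambda x.\top$ and $\lambda x.\bot$, let \tc{Ax2a} make it or its complement positive, and close the degenerate case by $G=\lambda x.\top$. The paper splits directly on the positive properties at $w$ and closes by ${\sim}G$ not being positive and \tc{Ax2a}. Given \tc{Ax2a} the two splits coincide, and both degenerate cases use that there is an individual.
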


\begin{proof}
Fix $w$.  The universal property is positive at $w$, by \tc{Ax3} and \tc{Ax4}.  If some
further property is positive at $w$, then $\Phi:=\Pos$ has two different members there, and
its conjunction is $G$ by the definition of $G$, so \tc{Ax1GenTwo} applies.  Otherwise the
universal property is the only positive one at $w$; then every positive property holds of
everything, so $G$ does, so ${\sim}G$ is not positive at $w$, and $\Pos G$ follows by
\tc{Ax2a}.
\end{proof}

\tc{Th4} then follows exactly as in the \emph{Notes}, from \tc{Ax2a}, \tc{Ax4} and \tc{L},
with no frame condition.  Requirement~(ii) holds: with \tc{Ax1GenTwo} and \tc{Ax2a} as the only
assumptions, Nitpick refutes both \tc{Th4} and the positivity of the universal property, and
reports each countermodel as genuine rather than possibly spurious (session \tc{I}, the theory
described under~(ii)).

It does not return the frame to the argument: the collapse of Section~\ref{sec:gex}
survives the restriction, under a proviso that excludes only a degenerate reading of
necessary existence.

\begin{theorem}[\tc{G\_ex\_Two\_at}]
\label{thm:gextwo}
In variant 2, assume \tc{Ax1GenTwo}, and let $w$ be a world at which some property other than
the universal one is positive.  Then an actual God-like being exists at $w$.  The proof uses
\tc{Ax1GenTwo}, \tc{Ax2a}, \tc{Ax3}, \tc{Ax4} and no frame condition.
\end{theorem}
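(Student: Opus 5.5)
The plan is to reuse the argument of Theorem~\ref{thm:gex} with $\chi = \lambda z.\ \neg\exists\actE y.\,G y$, changing only the set $\Phi$. Under \tc{Ax1GenTwo} it can no longer be empty at the world where no God-like being is assumed to exist; there it must have two different positive members. The new ingredient is the hypothesis: fix a property $q$ that is positive at $w$ and differs from $\lambda x.\top$.

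Assume, for contradiction, that no actual God-like being exists at $w$. I would define $\Phi$ by cases on the world of evaluation $u$:
\[
\Phi = \lambda\psi.\ \bigl((\exists\actE y.\,G y)\wedge(\Pos\psi\vee\Box(\forall z.\ \psi z\leftrightarrow\neg G z))\bigr)\vee\bigl(\neg(\exists\actE y.\,G y)\wedge(\psi = \alpha\vee\psi=\beta)\bigr).
\]
Here $\alpha$ and $\beta$ are two properties built from $q$ and $\chi$, chosen so that both hold of everything wherever $\chi$ is true. My first choice is $\alpha = \lambda z.\ q z\vee\chi z$ and $\beta = \lambda x.\top$. The remaining steps then run as follows.

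\begin{enumerate}
\item[(1)] \emph{Conjunction.} $\mathit{ConjOfPropsFrom}\,\chi\,\Phi$ is checked exactly as in Lemma~\ref{lem:l2}. At a successor where a God-like being exists, $\Phi$ contains $G$ (positive by Lemma~\ref{lem:ltwo}) and ${\sim}G$, so the conjunction fails of everything, as $\chi$ does. At a successor where none exists, $\chi$ holds and so do $\alpha$ and $\beta$, so the conjunction holds too.
\item[(2)] \emph{Positivity at $w$.} $\beta$ is positive by \tc{Ax3} and \tc{Ax4}. $\alpha$ is positive by \tc{Ax4} from $q$, since $q\supset_N\alpha$ holds trivially.
\item[(3)] \emph{Applying the axiom.} Given distinctness, \tc{Ax1GenTwo} makes $\chi$ positive at $w$.
\item[(4)] \emph{Conclusion.} As in Theorem~\ref{thm:gex}, $\chi\supset_N{\sim}G$, so ${\sim}G$ is positive at $w$ by \tc{Ax4}. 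This contradicts Lemma~\ref{lem:ltwo} and \tc{Ax2a}.
\end{enumerate}

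No frame condition enters anywhere, and every witness ($\alpha$, $\beta$, $\chi$, $\Phi$) is an $\mathcal{L}$-term in the bound variable $q$. So the proof should also come out hybrid-free.

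The main obstacle is distinctness, $\alpha\neq\beta$. Equality here is identity across all worlds, so it fails exactly when $q$ is universal at every world where an actual God-like being exists. I would try first to shrink $\alpha$ to a property that $q$ still necessarily implies and that equals $\top$ only when $q$ itself does, for instance something that uses $\neg\exists\actE y.\,G y$ only to restore truth at $\chi$-worlds.

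If that fails, I would handle the degenerate case separately. If $\alpha=\lambda x.\top$, then $\chi\vee q$ is universal, so $\neg\chi$ necessarily implies $q$. At $w$, where $\neg\chi$ fails, I would look for a contradiction from $q\neq\top$ via \tc{Ax2a}. Alternatively, I would swap $\beta$ for a second positive property whose disjunction with $\chi$ differs from $\alpha$.

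I expect this case analysis around the global equality to be where the ``proviso'' in the statement's hypothesis is really consumed.
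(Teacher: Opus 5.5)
There is a genuine gap, at the point you flag yourself. Step~(3) needs $\alpha\neq\beta$, and with $\chi=\lambda z.\ \neg\exists\actE y.\,Gy$ held constant the hypothesis does not supply it. Your $\alpha=\lambda z.\ qz\vee\chi z$ coincides with $\lambda x.\top$ whenever $q$ fails only at worlds without an actual God-like being. The hypothesis $q\neq\lambda x.\top$ says nothing about where $q$ fails.

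The fallbacks do not close this. Any $\alpha$ that holds of everything wherever $\chi$ is true can differ from $\lambda x.\top$ only at worlds with an actual God-like being. Shrinking $\alpha$ therefore just moves the burden to finding a positive property that fails at such a world. With $\alpha=\lambda z.\ Gz\vee\chi z$ this burden is exactly the side condition~($\dagger$) of Theorem~\ref{thm:v3-two}, where the paper does use your constant-$\chi$ construction, because the non-emptiness clause of variant~3 forces it there. In the degenerate case you only know that $\neg\chi$ necessarily implies $q$, and no contradiction with \tc{Ax2a} follows from that.

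The missing idea is to let $\chi$ carry $q$, instead of making the members of $\Phi$ absorb $\chi$. Put $\chi=\lambda z.\ \neg(\exists\actE y.\,Gy)\wedge qz$. Let $\Phi$ be $\{\lambda x.\top,\ q\}$ at worlds without an actual God-like being, and $\{\lambda x.\bot\}$ at worlds with one; your set containing $G$ and ${\sim}G$ would serve there as well. Then:
\begin{itemize}
\item Distinctness at $w$ is literally the hypothesis.
\item Both members are positive at $w$: $q$ by assumption, and $\lambda x.\top$ by \tc{Ax3} and \tc{Ax4}.
\item At a successor without an actual God-like being, the conjunction of $\Phi$ is $qz$, which is $\chi z$ there. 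At a successor with such a being, both sides fail.
\item $\chi\supset_N{\sim}G$ still holds, since $\chi$ implies that no actual God-like being exists.
\end{itemize}
\tc{Ax1GenTwo} asks for two different members only at $w$, so the singleton elsewhere does no harm. Your step~(4) then runs as you have it. This is the paper's proof. No case analysis on the global equality is needed: the proviso is consumed directly as the distinctness premiss.
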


\begin{proof}
Let $\psi$ be positive at $w$ with $\psi\neq\lambda x.\top$, and suppose no actual God-like
being exists at $w$.  Put
\begin{align*}
\chi &= \lambda z.\ \neg(\exists\actE y.\,G y) \wedge \psi z,\\
\Phi &= \lambda\psi'.\ \bigl(\neg(\exists\actE y.\,G y) \wedge
   (\psi' = \lambda x.\top \vee \psi' = \psi)\bigr)\\
 &\qquad {}\vee \bigl((\exists\actE y.\,G y) \wedge \psi' = \lambda x.\bot\bigr).
\end{align*}
At $w$ the set $\Phi$ has the two members $\lambda x.\top$ and $\psi$, which are different,
and both are positive there.  At a successor $v$: if an actual God-like being exists at $v$,
then $\Phi$ is $\{\lambda x.\bot\}$ there and its conjunction holds of nothing, as does
$\chi$; if not, then $\Phi$ is $\{\lambda x.\top,\ \psi\}$ and its conjunction is $\psi z$,
which is $\chi z$.  So $\chi$ is necessarily the conjunction of $\Phi$, and \tc{Ax1GenTwo}
makes $\chi$ positive at $w$.  Now $\chi\supset_N{\sim}G$ holds at $w$, since a $\chi$-thing
at a successor says that no actual God-like being is there.  By \tc{Ax4}, ${\sim}G$ is
positive at $w$, against Lemma~\ref{lem:ltwo} and \tc{Ax2a}.
\end{proof}

\tc{Ax1GenTwo} asks for two different members of $\Phi$ at the world of evaluation only, not
under the $\Box$ of $\mathit{ConjOfPropsFrom}$: at the successors with an actual God-like
being $\Phi$ is the singleton $\{\lambda x.\bot\}$, and nothing depends on it.  This is where
the world-dependence of $\Phi$ continues to work.

At the worlds the theorem does not reach, the universal property is the only positive one,
and there every individual is God-like: $G x$ says that $x$ has every positive property
(\tc{all\_God\_of\_only\_top}).

\begin{corollary}
\label{cor:gextwo}
$\val{\exists\actE x.\,G x}$ --- and with it \tc{Th3}, \tc{Th5} and \tc{MC} as in
Theorem~\ref{thm:consequences}, with no frame condition --- holds under either of two
conditions: that necessary existence is not the universal property, $E\neq\lambda x.\top$,
since then $E$ is by \tc{Ax3} a witness for the theorem at every world
(\tc{G\_ex\_Two\_of\_E}); or that every world has an actual individual, which covers the
remaining worlds (\tc{G\_ex\_Two}).
\end{corollary}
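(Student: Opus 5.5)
The plan is to reduce the corollary to Theorem~\ref{thm:gextwo}, applied world by world, and to the fact \tc{all\_God\_of\_only\_top}. Fix a world $w$. Either some property other than $\lambda x.\top$ is positive at $w$, or the universal property is the only positive one there; it is positive by \tc{Ax3} and \tc{Ax4}, as in the proof of Lemma~\ref{lem:ltwo}. In the first case Theorem~\ref{thm:gextwo} gives an actual God-like being at $w$ directly. So everything turns on the second case, and each of the two conditions is there to exclude or handle it.

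\emph{Condition $E\neq\lambda x.\top$.} Here $E$ differs from the universal property as a term. Since $\val{\Pos E}$ by \tc{Ax3}, $E$ is positive at every $w$ and is itself a positive property other than $\lambda x.\top$. The hypothesis of Theorem~\ref{thm:gextwo} therefore holds at every world, and the second case never arises (\tc{G\_ex\_Two\_of\_E}). The point to check is that the inequality in the theorem's hypothesis is the global, term-level one. Then ``$E\neq\lambda x.\top$'' transfers to every world without further argument, and no nominal is needed, because $E$ is a term of $\mathcal{L}$.

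\emph{Condition that every world has an actual individual.} In the second case, \tc{all\_God\_of\_only\_top} says that every individual is God-like at $w$. Indeed, $G x$ requires only that $x$ have every positive property, and the only positive property at $w$ is $\lambda x.\top$. Take the actual individual supplied at $w$; it is God-like there. Together with the first case, this gives $\exists\actE x.\,G x$ at $w$ (\tc{G\_ex\_Two}).

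The consequences then follow verbatim as in Theorem~\ref{thm:consequences}(a),(b). \tc{Th3} and \tc{Th5} are immediate once the consequent holds everywhere. \tc{MC} follows from \tc{Th1} applied to the God-like being at $w$ and the constant property $\lambda y.\varphi$, using the existence of a God-like being at each successor. \tc{Th1} needs only $\Pos G$, which Lemma~\ref{lem:ltwo} supplies under \tc{Ax1GenTwo}, and no frame condition enters. The only delicate step is the first condition's reading of $\neq$, and I expect it to be straightforward.
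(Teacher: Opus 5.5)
Your proposal is correct and is the paper's own argument. Theorem~\ref{thm:gextwo} handles every world at which some positive property other than $\lambda x.\top$ exists; under the first condition $E$ (positive everywhere by \tc{Ax3}, and globally distinct from $\lambda x.\top$) is that witness, and under the second \tc{all\_God\_of\_only\_top} with the given actual individual covers the remaining worlds. One small correction in the collapse step: \tc{Th1} rests not on $\Pos G$ but on \tc{Ax2a} and \tc{Ax2b}, which is why \tc{MC\_Two} adds \tc{2b} in Table~\ref{tab:deps}; so \tc{MC} still follows with no frame condition, only for a different reason than the one you give.
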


What is left is a world at which the universal property is the only positive one and nothing
is actual.  There every individual is God-like and none of them exists, so the conclusion
fails.  Hybrid-free, the axioms do not exclude such a world, or not by any means found here.
With a nominal they do, and no conjunction axiom is needed for it: by \tc{Ax2a} every property at
such a world is $\lambda x.\top$ or $\lambda x.\bot$, so $\lambda x\,t.\ t = u$ is neither unless
there is one world only (\tc{OneWorld}); in a single world with nothing actual the universal
property is an essence of every individual, its inclusion clause being vacuous, so necessary
existence is empty, against \tc{Ax3} (\tc{only\_top\_and\_empty\_domain\_False}).  Hence, under
\tc{Ax1GenTwo}, every world has an actual individual (\tc{ae\_Two}) and the corollary holds with
no condition (\tc{G\_ex\_Two\_unconditional}), at the cost of a nominal.
Section~\ref{sec:plain} draws the consequence.

In variant 3 the two-conjunct reading is genuinely weaker, and it costs the \emph{Notes} a
theorem they prove there.

\begin{proposition}
\label{prop:v3two}
In variant 3, \tc{Ax1GenTwo} yields neither the positivity of the universal property, nor
\tc{Th4}, nor \tc{G\_ex}, nor \tc{Ax1Gen} --- not even in \textsf{S4}.
\end{proposition}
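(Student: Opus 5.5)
The plan is to refute all four statements with a single finite countermodel. It is the one-world model the introduction alludes to: one world $w$ with $w\,r\,w$, so the frame is trivially reflexive and transitive and the model is an \textsf{S4} model. The domain of individuals is $\{a\}$, and the actualist domain at $w$ is \emph{empty}. With one world and one individual there are exactly two properties, $\lambda x.\top$ and $\lambda x.\bot$. I would set $\Pos = \{\lambda x.\bot\}$ at $w$. The idea is to exploit what distinguishes variant~3: its inclusion $\varphi\supset_N\psi$ carries the clause $\varphi\neq\lambda x.\bot$, so the empty property includes nothing, and \tc{Ax4} cannot propagate its positivity.

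The steps, in order:
\begin{enumerate}
\item[(1)] \emph{Compute $E$.} The only property $a$ has is $\lambda x.\top$. It is an essence of $a$, because $\lambda x.\top\supset_N\lambda x.\top$ holds: it is non-empty, and the actualist clause holds. Hence $E\,a$ would require $\exists\actE y.\,\top$ at $w$, which fails since nothing is actual. So $E=\lambda x.\bot$, which is positive, and \tc{Ax3} holds.
\item[(2)] \emph{Check the remaining axioms.}
  \begin{enumerate}
  \item[(a)] \tc{Ax2a}: exactly one of $\lambda x.\bot$ and its complement $\lambda x.\top$ is positive.
  \item[(b)] \tc{Ax2b} is immediate with one world.
  \item[(c)] \tc{Ax1} holds, since its only instance with both conjuncts positive is $\bot\wedge\bot=\bot$.
  \item[(d)] \tc{Ax4} holds vacuously: the only positive property is $\lambda x.\bot$, and $\lambda x.\bot\supset_N\psi$ fails by the non-emptiness clause.
  \item[(e)] \tc{Ax1GenTwo} holds vacuously, because there is only one positive property. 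So no $\Phi$ with $\mathit{PosProps}\,\Phi$ has two different members.
  \end{enumerate}
\item[(3)] \emph{Refute the four statements.}
  \begin{enumerate}
  \item[(a)] The universal property is not positive.
  \item[(b)] $G\,a$ fails, since $a$ lacks the positive property $\lambda x.\bot$. So nothing is God-like, and \tc{G\_ex} and \tc{Th4} fail.
  \item[(c)] \tc{Ax1Gen} fails at its empty instance. Take $\Phi=\emptyset$, which makes $\mathit{PosProps}\,\Phi$ vacuous, and $\varphi=\lambda x.\top$. Then $\mathit{ConjOfPropsFrom}\,\varphi\,\Phi$ holds, trivially even, because the actual domain is empty. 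Yet $\Pos(\lambda x.\top)$ is false.
  \end{enumerate}
\end{enumerate}

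The model is also what Nitpick should find. In \lean{} I would write it out over \texttt{Unit} for worlds and individuals, with the actuality predicate constantly false, and discharge each axiom by \texttt{decide} or by case analysis on the two properties. Function extensionality reduces every property to one of the two.

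The main obstacle is step (1) together with \tc{Ax4} in (2)(d), i.e.\ getting the interaction of Gödel's original essence with the Fig.~8 inclusion exactly right. The whole construction depends on two facts. First, $\lambda x.\top$ is an essence of $a$, so that $E$ becomes empty and \tc{Ax3} can be met by the empty property. Second, the empty property includes nothing, so that \tc{Ax4} does not transmit its positivity. If I misread the inclusion clause, for instance by placing $\varphi\neq\lambda x.\bot$ outside the $\Box$, both facts would need rechecking. With one reflexive world, however, the placement is immaterial. A secondary check is that the empty actual domain does not make some degenerate $\Phi$ satisfy \tc{Ax1GenTwo}. It cannot: that axiom requires two different positive members at $w$ itself, and only one positive property exists.
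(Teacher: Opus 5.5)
Your proposal is correct and is the paper's own proof: the same one-world, one-individual model with an empty actual domain and $\lambda x.\bot$ as the only positive property. In it \tc{Ax1GenTwo} holds vacuously, and \tc{Ax1Gen} fails at its empty instance. Your explicit checks spell out what the paper compresses into ``all axioms of variant~3 hold'': that $\lambda x.\top$ is an essence of $a$, which makes $E$ empty so that \tc{Ax3} is met, and that \tc{Ax4} never fires because of the non-emptiness clause.
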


\begin{proof}
One world, one individual, nothing actual at that world, and the empty property positive and
no other.  All axioms of variant 3 hold, and so does \tc{Ax1GenTwo}, vacuously: two
\emph{different} positive properties cannot exist, $\lambda x.\bot$ being the only positive
one.  The frame is universal, hence reflexive and transitive.  The universal property is not
positive, no God-like being possibly exists, and \tc{Ax1Gen} fails, since its empty instance
would make the universal property positive.
\end{proof}

The model is written out and machine-checked in \lean{}, as the module
\tc{Ax1GenTwoVariant3Model}; Nitpick finds it as well (session \tc{F}, variant~3 under \tc{Ax1GenTwo}).  The detector of
Section~\ref{sec:criterion} runs over the two \lean{} modules of this section and flags no
nominal: Lemma~\ref{lem:ltwo} and Theorem~\ref{thm:gextwo} use none, and in the model module
what it flags, as foreign to $\mathcal{L}$, are the model's own definitions of $\Pos$ and $G$ ---
semantic objects, which is what a model consists of.
Under the two-conjunct reading \tc{Th4} is therefore not available from the axioms of
variant~3 at all, and the \emph{Notes}' own proof of it there, from \tc{Ax2a}, \tc{L} and
\tc{Ax1Gen}~\cite[Sect.~4.5]{J75}, is one of the derivations the restriction removes.  Their
AFP source states \tc{Th4} as an axiom there, no trusted tactic having replayed the provers'
proof~\cite[Footnote~30]{J75}.

This is also where the three open questions are answered for good.
Section~\ref{sec:gex} answered them from \tc{Ax1Gen}, through the empty conjunction, which
is the instance this section rejects.  Under \tc{Ax1GenTwo} the answers have to be given
again, variant by variant.  For variant~2, question~1: being God-like is positive by
Lemma~\ref{lem:ltwo}, a God-like being possibly exists, as \tc{Th4} follows from \tc{Ax2a},
\tc{Ax3} and \tc{Ax4} once \tc{L} is available, and one actually exists at every world at
which some positive property is not the universal one, by Theorem~\ref{thm:gextwo}; hence
\tc{Th3}, \tc{Th5} and \tc{MC} under the conditions of Corollary~\ref{cor:gextwo}.  Scott's
variant, question~3, behaves the same way.

\begin{theorem}[\tc{Ax1GenTwoScott}]
\label{thm:scott-two}
In Scott's variant with \tc{Ax1GenTwo} in place of \tc{A3}: $\val{\Pos G}$ (\tc{L\_Two}, from
\tc{A1}, \tc{A2}, \tc{A5}); an actual God-like being exists at every world at which some
positive property is not the universal one (\tc{G\_ex\_Two\_at}), hence everywhere if every
world has an actual individual (\tc{G\_ex\_Two}); and then \tc{T3} and \tc{MC} (\tc{T3\_Two},
\tc{MC\_Two}).  All with no frame condition, and hybrid-free.
\end{theorem}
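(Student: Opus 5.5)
The plan is to replay the variant~2 arguments of Lemma~\ref{lem:ltwo}, Theorem~\ref{thm:gextwo} and Corollary~\ref{cor:gextwo} with Scott's axioms in place of G\"odel's: \tc{A2} for \tc{Ax4}, \tc{A1} for \tc{Ax2a}, and \tc{A5} for \tc{Ax3}. Only the first step, the positivity of the universal property, needs more than a renaming.

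\textbf{Positivity of the universal property.} Fix $w$. By \tc{A5}, $\NE$ is positive at $w$. Every property is necessarily implied by $\NE$ into $\lambda x.\top$, trivially, so \tc{A2} makes $\lambda x.\top$ positive at $w$. This replaces the use of \tc{Ax3} and \tc{Ax4}, and it is why \tc{A5} is listed among the premisses of \tc{L\_Two}.

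\textbf{\tc{L\_Two}.} The proof then follows Lemma~\ref{lem:ltwo}. If some positive property other than $\lambda x.\top$ exists at $w$, I take $\Phi:=\Pos$. It has two different members at $w$, and its conjunction is $G$ by definition, necessarily and under the actualist quantifier of $\mathit{ConjOfPropsFrom}$. Then \tc{Ax1GenTwo} gives $\Pos G$. Otherwise every positive property at $w$ holds of everything, so $G$ holds of everything at $w$, so ${\sim}G$, which holds of nothing, is not $\lambda x.\top$ and hence is not positive. Then \tc{A1} yields $\Pos G$.

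\textbf{\tc{G\_ex\_Two\_at}.} Let $\psi\neq\lambda x.\top$ be positive at $w$ and suppose no actual God-like being exists at $w$. I use the same world-dependent $\chi$ and $\Phi$ as in Theorem~\ref{thm:gextwo}; checking $\mathit{ConjOfPropsFrom}\,\chi\,\Phi$ is purely about the definitions and is unchanged. Two facts are needed:
\begin{enumerate}
\item[(1)] Both members of $\Phi$ at $w$ are positive: $\lambda x.\top$ by the first step, $\psi$ by hypothesis. So \tc{Ax1GenTwo} gives $\Pos\chi$ at $w$.
\item[(2)] $\Box(\forall\actE y.\,\chi y\supset{\sim}G y)$ holds at $w$. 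This is exactly the premise shape of \tc{A2}, which gives $\Pos{\sim}G$.
\end{enumerate}
Together with $\Pos G$, this contradicts \tc{A1}.

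\textbf{\tc{G\_ex\_Two}, \tc{T3\_Two} and \tc{MC\_Two}.} At the remaining worlds only $\lambda x.\top$ is positive, so every individual is God-like. If such a world has an actual individual, that individual is an actual God-like being, which gives \tc{G\_ex\_Two}. Then \tc{T3\_Two} is immediate, as in Theorem~\ref{thm:consequences}(a). \tc{MC\_Two} follows as in Theorem~\ref{thm:consequences}(b), using \tc{T2}: a God-like being has essence $G$. In the \emph{Notes}, \tc{T2} uses \tc{A1}, \tc{A2} and the definitions but not \tc{A3}, so it stands here.

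\textbf{Hybrid-freeness.} Every witness is a term of $\mathcal{L}$: $\Pos$, $\lambda x.\top$, $\NE$, ${\sim}G$, $\chi$, $\Phi$, the constant property $\lambda y.\varphi$ of the collapse step, and the equalities in $\Phi$ at lifted type. No frame condition is used anywhere.

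\textbf{Main obstacle.} I expect it to be the first step if $\NE$ is defined so that the trivial inclusion needs care: \tc{A2} quantifies actualistically, but the inclusion into $\lambda x.\top$ holds for any $\varphi$. A second, smaller check is that the \emph{Notes}' proof of \tc{T2} does not route through \tc{A3}. If it did, I would rederive it from \tc{L\_Two}, $\Pos G$ replacing \tc{A3}.
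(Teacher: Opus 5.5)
Your proposal is correct and follows the paper's own argument: you replay Lemma~\ref{lem:ltwo} and Theorem~\ref{thm:gextwo} with \tc{A2} for \tc{Ax4}, \tc{A1} for \tc{Ax2a}, and \tc{A5} with \tc{A2} for the positivity of the universal property, then conclude as in Corollary~\ref{cor:gextwo} and Theorem~\ref{thm:consequences}. One small correction: \tc{T2} rests on \tc{A1} and \tc{A4}, not on \tc{A2} (every property of a God-like being is positive, and \tc{A4} keeps it positive at successors), which is why \tc{A4} appears among the postulates of \tc{MC\_Two}; it still avoids \tc{A3}, so your conclusion stands.
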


\begin{proof}
As Lemma~\ref{lem:ltwo} and Theorem~\ref{thm:gextwo}, with \tc{A2} for \tc{Ax4}, \tc{A1} for
\tc{Ax2a}, and \tc{A5} with \tc{A2} for the positivity of the universal property.
\end{proof}

Variant~3, question~2, is different, since Proposition~\ref{prop:v3two} takes \tc{Th4} away.
Taking \tc{Th4} as a hypothesis instead brings the answers back; Section~\ref{sec:plain} shows
that the hypothesis can be dispensed with.

\begin{theorem}[\tc{Ax1GenTwoVariant3Th4}]
\label{thm:v3-two}
In variant~3 with \tc{Ax1GenTwo}, and \tc{Th4} postulated:
\begin{enumerate}
\item[(a)] $\val{\Pos G}$ (\tc{L\_Two}), from \tc{Ax1GenTwo}, \tc{Ax2a}, \tc{Ax3}, \tc{Ax4}
  and \tc{Th4}; the universal property is positive because $E$ is non-empty by \tc{Th4}.
\item[(b)] Suppose that at some world with an actual God-like being some individual is not
  God-like~($\dagger$).  Then an actual God-like being exists at every world
  (\tc{G\_ex\_Two}), hence \tc{Th3}, \tc{Th5}, \tc{MC}, \tc{AllExist}, Lemma~\ref{lem:reach},
  \tc{UniqueEss1} and \tc{UniqueEss2} as in Theorem~\ref{thm:uniqueess-k} --- none of them
  with a frame condition, by Lemma~\ref{lem:serial}.  All of it is hybrid-free.
\item[(c)] Without ($\dagger$), \tc{OneWorld} of Theorem~\ref{thm:v3}(b), which uses no
  conjunction axiom, and \tc{Th4} give \tc{G\_ex} outright (\tc{G\_ex\_Two\_hybrid}) --- with
  nominals.
\end{enumerate}
\end{theorem}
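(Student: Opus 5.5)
Part~(a) goes as in Lemma~\ref{lem:ltwo}, with one change: the positivity of the universal property is no longer free. In variant~3 the inclusion $E\supset_N\lambda x.\top$ needs $E\neq\lambda x.\bot$ at every successor. By \tc{Th4} some God-like being exists at some world, and by \tc{Th1}/\tc{Th2} it has necessary existence, so $E$ is non-empty. \tc{Ax3} and \tc{Ax4} then make $\lambda x.\top$ positive at every world. The case split of Lemma~\ref{lem:ltwo} goes through unchanged:
\begin{itemize}
\item if some further property is positive at $w$, then $\Phi:=\Pos$ has two different members there and \tc{Ax1GenTwo} gives $\Pos G$;
\item otherwise every individual is God-like at $w$, so ${\sim}G$ is not positive, and $\Pos G$ follows by \tc{Ax2a}.
\end{itemize}

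For (b) I first transfer Theorem~\ref{thm:gextwo} to variant~3. Take the same $\chi$ and $\Phi$. The one new obligation is the non-emptiness clause of $\chi\supset_N{\sim}G$. We have $\chi z w$ wherever $\psi z w$, and a positive $\psi$ is non-empty, since otherwise the inclusion into $\lambda x.\bot$ would make the empty property positive against \tc{Ax2a}. This gives \tc{G\_ex\_Two\_at}: an actual God-like being exists at every world where some positive property differs from $\lambda x.\top$.

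Next I spread this to every world using ($\dagger$). At the world $w_0$ of ($\dagger$), some individual $a$ is not God-like, so some positive $\psi_0$ fails of $a$. Let $g$ be the actual God-like being at $w_0$. By \tc{Ax2b} and \tc{Ax4}, every positive property is positive at every successor, and I want some property other than the universal one to be positive everywhere. The plan is to use the essence of $g$ together with $E\,g\,w_0$. The constant property ``$a$ lacks $\psi_0$ at $w_0$''-style witnesses are unavailable hybrid-free, so I would instead argue that $E$ itself (or $G$) is not universal: $a$ is not God-like, and I would show $a$ lacks $E$ because $G$, being an essence of $g$, is necessarily exemplified only by God-like beings. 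This is the main obstacle. If $E\neq\lambda x.\top$ can be secured this way, then \tc{Ax3} makes $E$ a positive non-universal property at every world, exactly as in \tc{G\_ex\_Two\_of\_E}, and \tc{G\_ex\_Two} follows.

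From \tc{G\_ex\_Two} the rest of (b) copies Section~\ref{sec:v3}. \tc{Th3} and \tc{Th5} are immediate, and \tc{MC} follows as in Theorem~\ref{thm:consequences}(b) with \tc{Th1}. \tc{AllExist} follows by the vacuous-essence argument of Theorem~\ref{thm:v3}(c), Lemma~\ref{lem:reach} by its stated proof, and \tc{UniqueEss1}/\tc{UniqueEss2} as in Theorem~\ref{thm:uniqueess-k}. Each use of reflexivity is replaced by Lemma~\ref{lem:serial}, which needs only \tc{Ax2a}, \tc{Ax3}, \tc{Ax4}. For (c), \tc{OneWorld} reduces everything to a single world, which contains the God-like being that \tc{Th4} supplies, so \tc{G\_ex} holds there, and this part is hybrid.
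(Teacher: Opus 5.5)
Your (a) and (c), and your derivation of the remaining statements of (b) from \tc{G\_ex\_Two}, follow the paper. The core of (b), getting from ($\dagger$) to an actual God-like being at every world, has a genuine gap, and it fails at two points.

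First, \tc{G\_ex\_Two\_at} does not transfer to variant~3 by your argument. The non-emptiness clause of $\chi\supset_N{\sim}G$, for $\chi=\lambda z.\,\neg(\exists\actE y.\,G y)\wedge\psi z$, needs $\psi$ to hold of something at a world \emph{without} an actual God-like being, for instance at $w$. Positivity gives you only that $\psi\neq\lambda x.\bot$ globally, from $\Pos(\lambda x.\top)$ and \tc{Ax2a}. Your inclusion argument cannot rule out that $\psi$ is empty at $w$, because $\psi\supset_N\lambda x.\bot$ at $w$ concerns only the successors of $w$. Nothing you have stops $\psi$ from being instantiated only at worlds where a God-like being actually exists. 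This is exactly why the paper says the construction ``does not transfer as it stands''.

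Second, the spreading step is not merely open; it fails as planned, because ($\dagger$) does not give $E\neq\lambda x.\top$. Take one world accessible from itself, two actual individuals $g$ and $a$, and let $\lambda x.\top$ and $\{g\}$ be the positive properties. All axioms of variant~3, \tc{Ax1GenTwo}, \tc{Th4} and ($\dagger$) hold. Yet each individual's only essence is its singleton, which is actually exemplified, so $E$ is universal. Switching to $G$, which ($\dagger$) does make non-universal, runs into the first problem again: $\chi=\lambda z.\,\neg(\exists\actE y.\,G y)\wedge G z$ would need a merely possible God-like being at a world without an actual one.

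The missing idea is to let $\chi$ be the constant property $\lambda z.\,\neg\exists\actE y.\,G y$, which is non-empty at $w$ by the very assumption. Beside $\lambda x.\top$ in $\Phi$, put the property $\psi_s=\lambda z.\,G z\vee\neg\exists\actE y.\,G y$. It has the three features the argument needs:
\begin{enumerate}
\item[(1)] It is positive at every world by \tc{Ax4} from $G$; its inclusion clause holds because \tc{Th4} makes $G$ non-empty.
\item[(2)] It is universal wherever no God-like being actually exists, so the conjunction of $\{\lambda x.\top,\psi_s\}$ is $\chi$ there, while $\Phi=\{\lambda x.\bot\}$ where one exists.
\item[(3)] The statement $\psi_s\neq\lambda x.\top$ \emph{is} ($\dagger$).
\end{enumerate}
Since inequality of properties is global, ($\dagger$) supplies the two different members at every world at once. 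You therefore need neither a transfer of \tc{G\_ex\_Two\_at} nor a spreading argument. \tc{Ax1GenTwo} makes $\chi$ positive, and $\chi\supset_N{\sim}G$, \tc{Ax4}, \tc{L\_Two} and \tc{Ax2a} give the contradiction as before.
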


\begin{proof}
(a) as Lemma~\ref{lem:ltwo}.  (b) The construction of Theorem~\ref{thm:gextwo} does not
transfer as it stands: the inclusion $\chi\supset_N{\sim}G$ now requires $\chi$ to be
non-empty, and for $\chi = \lambda z.\ \neg(\exists\actE y.\,G y)\wedge\psi z$ with an arbitrary
positive $\psi\neq\lambda x.\top$ that is not available.  Take instead the constant property
$\chi = \lambda z.\ \neg\exists\actE y.\,G y$, which is non-empty at a world without an actual
God-like being, and as the second member of $\Phi$ beside $\lambda x.\top$ the property
$\psi_s = \lambda z.\ G z \vee \neg\exists\actE y.\,G y$.  It is positive by \tc{Ax4} from
$G$, since $G$ is non-empty by \tc{Th4}; it is the universal property at every world without
an actual God-like being, so that the conjunction of $\{\lambda x.\top, \psi_s\}$ is $\chi$
there, while $\Phi = \{\lambda x.\bot\}$ where one exists; and it differs from $\lambda x.\top$
exactly by ($\dagger$).  The rest is as before.  The remaining statements are the proofs of
Section~\ref{sec:v3} with \tc{G\_ex\_Two} for \tc{G\_ex}.  (c) is immediate.
\end{proof}

The side condition ($\dagger$) fails only if every individual is God-like wherever a God-like
being actually exists.  It holds, together with all axioms of variant~3, \tc{Ax1GenTwo} and
\tc{Th4}, in a one-world model with two individuals, one of them God-like and the other not,
which Nitpick finds (session \tc{K}, variant~3 with \tc{Th4} postulated).  Whether it can be dropped from the hybrid-free proof is
left open.

So the answers stand under the two-conjunct reading, and without the degenerate instances:
for questions~1 and~3 under a side condition that is not a restriction of substance --- it
fails only where every positive property is the universal one, and it holds wherever the
individual domains are non-empty (Corollary~\ref{cor:gextwo}) --- and for question~2 either with
the possible existence of a God-like being assumed, which \tc{Ax1GenTwo} no longer supplies, and
hybrid-free under~($\dagger$), or outright once the non-emptiness clause of Fig.~8 is confined to
the essence (Section~\ref{sec:plain}).  What the degenerate instances bought was not the answers
but their unconditional form.

\subsection{Variant 3: where the non-emptiness clause belongs}
\label{sec:plain}

Proposition~\ref{prop:v3two} is not the last word on variant~3.  The \emph{Notes} repair
G\"odel's original essence by requiring the including property to be non-empty, and they state
that repair inside the abbreviation for necessary property inclusion, which \tc{Ax4} then
inherits.  Nothing about \tc{Ax4} calls for it, and it is exactly what costs variant~3 its
theorems: $E\supset_N\lambda x.\top$ needs $E$ non-empty, which the two-conjunct reading no
longer supplies, so the step that makes the universal property positive in variant~2, from
\tc{Ax3} and \tc{Ax4}, is unavailable.

The clause cannot simply be dropped either.  Its purpose is to block the inconsistency of
G\"odel's 1970 axioms, which arises because without it the empty property necessarily implies
every other and is an essence of everything~\cite[Sects.~4.2, 4.5]{J75}; take it out of the
essence and variant~3 is variant~1 again.  It is needed there, then, and the question is whether
it is needed anywhere else.  Confining it to the essence, and letting \tc{Ax4} use the inclusion
of variant~2, restores everything without a postulate.

\begin{figure}[t]
\paragraph{The emended variant in full.}  Everything it assumes, in one place.  A predicate
$\Pos$ on properties, two inclusions kept apart, and the definitions of variant~3:
\begin{align*}
\varphi\supset_P\psi &= \Box(\forall\actE y.\ \varphi y\supset\psi y)
  &&\text{plain inclusion}\\
\varphi\supset_N\psi &= \Box\bigl(\varphi\neq(\lambda x.\bot) \wedge
  \forall\actE y.\ \varphi y\supset\psi y\bigr) &&\text{inclusion of Fig.~8}\\
G\,x &= \forall\varphi.\ \Pos\varphi\supset\varphi x &&\text{God-like}\\
\varphi\Ess x &= \forall\psi.\ \psi x\supset(\varphi\supset_N\psi)
  &&\text{essence, G\"odel's own}\\
E\,x &= \forall\varphi.\ \varphi\Ess x\supset\Box\exists\actE y.\ \varphi y
  &&\text{necessary existence}
\end{align*}
The axioms, all under $\val{\cdot}$:
\begin{align*}
\tc{Ax2a}&:\ \Pos\varphi \vee^{e} \Pos{\sim}\varphi\\
\tc{Ax2b}&:\ \Pos\varphi\supset\Box\Pos\varphi\\
\tc{Ax3}&:\ \Pos\,E\\
\tc{Ax4}&:\ \Pos\varphi \wedge (\varphi\supset_P\psi) \supset \Pos\psi\\
\tc{Ax1GenTwo}&:\ (\exists\psi_1\psi_2.\,\Phi\psi_1 \wedge \Phi\psi_2 \wedge \psi_1\neq\psi_2\\
&\qquad {}\wedge \mathit{PosProps}\,\Phi \wedge \mathit{ConjOfPropsFrom}\,\varphi\,\Phi)
  \supset \Pos\varphi
\end{align*}
with $\mathit{PosProps}\,\Phi = \forall\psi.\ \Phi\psi\supset\Pos\psi$ and
$\mathit{ConjOfPropsFrom}\,\varphi\,\Phi = \Box(\forall\actE z.\ \varphi z \leftrightarrow
\forall\psi.\ \Phi\psi\supset\psi z)$ as in Section~\ref{sec:setting}.  The clause of Fig.~8
occurs in the essence and nowhere else; \tc{Ax1} is left out, being a consequence
(\tc{Ax1\_of\_Two}); and no condition is placed on the accessibility relation.
\end{figure}

\begin{theorem}[\tc{Ax1GenTwoVariant3Plain}]
\label{thm:plain}
In variant~3 with the non-emptiness clause in the essence only, under \tc{Ax1GenTwo}:
\begin{enumerate}
\item[(a)] the universal property is positive, from \tc{Ax3} and \tc{Ax4} and no conjunction
  axiom; and $\val{\Pos G}$, \tc{Th4}, and an actual God-like being at every world at which some
  positive property is not the universal one, exactly as in variant~2
  (Lemma~\ref{lem:ltwo}, Theorem~\ref{thm:gextwo});
\item[(b)] hence \tc{Th3}, \tc{Th5} and \tc{MC}, with no frame condition, wherever every world
  has an actual individual --- the condition of Corollary~\ref{cor:gextwo}, not the side
  condition~($\dagger$) of Theorem~\ref{thm:v3-two};
\item[(c)] and, again with no frame condition (Lemma~\ref{lem:serial}), \tc{AllExist},
  Lemma~\ref{lem:reach}, \tc{EssMemI\_K}, \tc{UniqueEss1} and \tc{UniqueEss2} as in
  Theorem~\ref{thm:uniqueess-k};
\item[(d)] and, with a nominal, more.  The world left aside in (a), where the universal property
  is the only positive one and nothing is actual, cannot occur: applying \tc{Ax2a} to the nominal
  $\lambda x\,t.\ t = u$ forces a single world (\tc{OneWorld}), and there the universal property
  is an essence of every individual, so necessary existence is empty, against \tc{Ax3}
  (\tc{only\_top\_and\_empty\_domain\_False} --- no conjunction axiom is used).  Hence every world
  has an actual individual (\tc{ae\_Two}), the condition of (b) is discharged and an actual
  God-like being exists everywhere (\tc{G\_ex\_Two\_unconditional}); the collapse then applies to
  the nominal $\lambda u.\ u = w$ and makes the accessibility relation the identity
  (\tc{R\_Id}); and from that \tc{Ax1GenTwo} yields the full \tc{Ax1Gen}
  (\tc{Ax1Gen\_of\_Two}), so the two are equivalent here (\tc{Two\_equivalent}).
\end{enumerate}
Parts (a)--(c) are hybrid-free, and \tc{Th4} is a theorem here rather than a hypothesis.  Part
(d) is not, at any step: each of its links passes through one of the two nominals.  All four parts
are proved in both systems.
\end{theorem}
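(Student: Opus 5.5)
The plan is to redo the variant-2 arguments almost verbatim, observing at each step which inclusion is used. For (a), $\lambda x.\top$ is positive: by \tc{Ax3} $\Pos E$ holds, and $E\supset_P\lambda x.\top$ holds trivially, since the plain inclusion has no non-emptiness clause, so \tc{Ax4} gives $\Pos(\lambda x.\top)$ at every world. That this needs no conjunction axiom is exactly the point of confining the clause to the essence. Then $\val{\Pos G}$ follows word for word as in Lemma~\ref{lem:ltwo}: either $\Phi:=\Pos$ has a second member and \tc{Ax1GenTwo} applies, or $\lambda x.\top$ is the only positive property and \tc{Ax2a} gives $\Pos G$. \tc{Th4} follows from \tc{Ax2a}, \tc{Ax4} and \tc{L}: if $G$ were necessarily empty, $G\supset_P{\sim}G$ would hold vacuously, making ${\sim}G$ positive. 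The existence claim is Theorem~\ref{thm:gextwo}, with the same $\chi$ and $\Phi$, since its final step uses only $\chi\supset_P{\sim}G$. The worry that sank Theorem~\ref{thm:v3-two}(b) disappears, and ($\dagger$) is not needed.

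For (b), I use Corollary~\ref{cor:gextwo}. If every world has an actual individual, then $\lambda x.\top$ is either not the only positive property, or it is and every actual individual is God-like; in both cases $\val{\exists\actE x.\,Gx}$ holds. \tc{Th3} and \tc{Th5} are then immediate. \tc{MC} needs \tc{Th1} for the essence of variant~3: $G\Ess g$ requires $G\supset_N\psi$, including the clause $G\neq\lambda x.\bot$ at every successor, which \tc{Th4} (indeed $G$ exemplified at $w$) supplies. The argument of Theorem~\ref{thm:consequences}(b) then goes through, using the God-like being at the successor. For (c), I rerun Lemma~\ref{lem:serial}, Lemma~\ref{lem:reach}, \tc{AllExist} and Theorem~\ref{thm:uniqueess-k} with \tc{G\_ex} in the form just obtained. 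These use only the essence clause (still the Fig.~8 one) and $E$ of the God-like being via \tc{Ax3}, not \tc{Ax4} with the clause. Lemma~\ref{lem:serial} itself works with $\supset_P$ as well, since it is vacuous at a dead end.

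For (d), I follow the sketch given after Corollary~\ref{cor:gextwo}. At a world where $\lambda x.\top$ is the only positive property, \tc{Ax2a} makes every property either $\lambda x.\top$ or $\lambda x.\bot$ pointwise there, in the sense of positivity of it or its complement; the uses of \tc{Ax4} here need care. Applying this to $\lambda x\,t.\,t=u$ forces one world. With an empty domain the essence clause is vacuous, so $\lambda x.\top$ is an essence of every individual, never exemplified by an actual being; hence $E$ is empty, and $\Pos E$ contradicts the positivity of $\lambda x.\top$ by \tc{Ax2a}. This yields \tc{ae\_Two} and therefore unconditional \tc{G\_ex}. \tc{MC} applied to $\lambda u.\,u=w$ gives $w\,r\,v\Rightarrow v=w$, and seriality gives reflexivity, so \tc{R\_Id} holds. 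With $r$ the identity, every $\Phi$ of \tc{Ax1Gen} with fewer than two different members is handled as follows. If $\Phi$ is empty, its conjunction is $\lambda x.\top$ on actual individuals, positive by \tc{Ax4} from $\lambda x.\top$. If $\Phi=\{\psi\}$, then $\psi\supset_P\varphi$, and \tc{Ax4} applies. This gives \tc{Ax1Gen\_of\_Two} and the equivalence.

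The main obstacle I expect is the step from ``$\lambda x.\top$ is the only positive property'' to the one-world conclusion in (d): deriving from \tc{Ax2a}, \tc{Ax4} and \tc{Ax2b} that the nominal, whose extension varies between worlds, must be positive or its complement must be, and that either forces coincidence with $\top$ under $\supset_P$. I would try first to show that any positive property is necessarily universal on actual individuals.
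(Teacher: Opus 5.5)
Your plan for (a)--(c) is the paper's own argument. The proofs of Section~\ref{sec:ax1gentwo} transfer because they use \tc{Ax4} only through the plain inclusion, while G\"odel's essence, \tc{Th1} and $E$ are untouched. The results of Theorem~\ref{thm:uniqueess-k} are then rerun with the God-like being so obtained. The later links of your (d) are also right: \tc{R\_Id} from \tc{MC} at $\lambda u.\ u=w$ plus seriality, and the empty and singleton instances of \tc{Ax1Gen} discharged by \tc{Ax4} once $r$ is the identity.

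The gap is the first link of (d), \tc{OneWorld}. You leave it as the ``main obstacle'', and the route you propose for it would not close it.

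\emph{The correct reading.} The case ``$\lambda x.\top$ is the only positive property at $w$'' is the complement of the hypothesis of Theorem~\ref{thm:gextwo}. It is the same case split you already use in (a), where a second member of $\Phi:=\Pos$ must satisfy the premiss $\psi_1\neq\psi_2$ of \tc{Ax1GenTwo}. So it reads $\forall\psi.\ \Pos\psi\,w\supset\psi=\lambda x.\top$, with $=$ the primitive identity of properties, which is identity across all worlds. It does not mean agreement at $w$ (your ``pointwise there''), nor coextension on actual individuals.

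\emph{With that reading the step is immediate,} and it needs neither \tc{Ax4} nor \tc{Ax2b}. \tc{Ax2a} at $w$, instantiated with $n_u=\lambda x\,t.\ t=u$, gives $\Pos n_u$ or $\Pos{\sim}n_u$ at $w$. Hence $n_u=\lambda x.\top$ or ${\sim}n_u=\lambda x.\top$. The second is false at the world $u$ itself, and the first says that every world is $u$.

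\emph{Why your proposed lemma does not help.} ``Every positive property is necessarily universal on actual individuals'' is far weaker than what the case hypothesis already gives you. At the world in question nothing is actual. A constraint on the actual individuals at successors of $w$ says nothing about the extension of $n_u$ at worlds that are not successors, or at successors with no actual individual. Since \tc{ae\_Two}, \tc{G\_ex\_Two\_unconditional}, \tc{R\_Id} and \tc{Ax1Gen\_of\_Two} all rest on this link, (d) is not established as you have it.

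There is also a smaller omission in the next step. ``Hence $E$ is empty'' needs the single world to see itself: at a one-world dead end every $\Box$ is vacuous, so $E$ is universal. Lemma~\ref{lem:serial}, which holds with the plain inclusion, supplies that.
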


\begin{proof}
The inclusion of the \emph{Notes} implies the plain one, the clause being an added conjunct
under the same $\Box$, so this \tc{Ax4} yields theirs and every theorem of their variant~3
continues to hold.  What (a)--(c) add comes from Section~\ref{sec:ax1gentwo}, whose proofs
transfer unchanged, since they use \tc{Ax4} only through the plain inclusion; G\"odel's essence,
\tc{Th1} and necessary existence are untouched, and the proofs of
Theorem~\ref{thm:uniqueess-k} go through with the God-like being that (a) supplies.
\end{proof}

The axioms are consistent, and consistent with \tc{Ax1GenTwo}: Nitpick finds a genuine model of
all of them together, one world and two individuals, both actual, the universal property and one
half of the remaining complementary pair positive (\tc{consistency\_Two}).  What
requirement~(ii) objects to is that the conjunction axiom should deliver the positivity of the
universal property and the closure of positivity under necessary coextension; here both come
from \tc{Ax3} and \tc{Ax4}, as they do in variant~2, which the \emph{Notes} accept.  So
variant~3 answers question~2 under the two-conjunct reading after all --- not by postulating
\tc{Th4}, which the reading takes away, but by putting the clause of Fig.~8 where G\"odel's
definition needs it.

What the emendation does not do is keep the degenerate instances out, and (d) is the reason.
Once the accessibility relation is the identity, the members of $\Phi$ at a successor are its
members at the world of evaluation, and a conjunction over one member or over none is discharged
by \tc{Ax4}'s plain inclusion, from that member's positivity or from the universal property's.
Every instance of \tc{Ax1Gen} is available again, the two degenerate ones included: in this
variant the restriction has no force.

Three things should be said about that, and they are what makes it a two-layer result of the
kind Section~\ref{sec:v3} met.  The derivation is not one of the modal object language:
\tc{OneWorld} and \tc{R\_Id} are nominal arguments, and whether the equivalence holds hybrid-free
is open.  Requirement~(ii) is untouched, since it asks what a conjunction axiom yields with
\tc{Ax2a} and nothing else, and there the readings do differ (session \tc{I}); what the result
defeats is not the requirement but the ambition behind it.  And the same holds in variant~2, where the world
left aside in Corollary~\ref{cor:gextwo} is the same one: the argument of (d) uses only
\tc{Ax2a}, \tc{Ax3} and \tc{Ax4}, and it goes through there unchanged, up to and including the
equivalence of \tc{Ax1GenTwo} with \tc{Ax1Gen} (session \tc{E}).  So the restriction has no force
in either variant once nominals are admitted, and the two-conjunct reading is a reading of the
object language only.

Under that emendation nothing of the \emph{Notes} is lost.  Of their results, 49 consume
\tc{Ax1Gen}: in each of the three quantifier settings of variants~2 and~3, \tc{L}, \tc{Th4},
\tc{Th5}, \tc{MC} and the two ultrafilter lemmas, in variant~3 also the positivity and
negativity lemmas and the two auxiliaries they rest on, and \tc{NecNoEvil} of Fig.~11.  All of
them are re-established under \tc{Ax1GenTwo} in the actualist setting --- in variant~2 the
collapse results under the condition of Corollary~\ref{cor:gextwo}, in the emended variant~3 and
for the rest outright.  The possibilist and
mixed-quantifier copies have now been run as well, in four further \isa{} sessions (\tc{M}--\tc{P},
each with its \lean{} counterpart), and the expectation
that the same proofs would serve was right in the main and wrong in one place.  Of the 76 results
across the four settings, 44 carry over verbatim and 30 need only local repair; the side conditions
get weaker rather than stronger, the condition \tc{ae} of Corollary~\ref{cor:gextwo} dropping from
the collapse results in all four and surviving only on the existence of a God-like being in the two
mixed-quantifier ones.  What is lost is \tc{AllExist}, and only in the emended variant~3: the proof of
Section~\ref{sec:plain} takes the essence $\lambda z.\ z = y \wedge \neg\mathit{existsAt}\,z$ of a
God-like being and relies on its inclusion clause being vacuous \emph{because that clause
quantifies actualistically}; where the clause is possibilist it is no longer vacuous, and Nitpick
gives a genuine countermodel.  It cannot be repaired there, $\mathit{existsAt}$ occurring in no
axiom of that setting.  Nothing downstream depends on it, the actuality side conditions it
discharged having gone with the same change, so \tc{EssMemI\_K} and the uniqueness results go
through more easily.  The mixed-quantifier setting was where the expectation looked least safe,
since there the \emph{Notes}' own \tc{Th4} goes through symmetry (Section~\ref{sec:audit}); under
\tc{Ax1GenTwo} it needs no frame condition, a merely possible God-like being at every world and
the actualist reading of necessary existence doing the work that symmetry did.  \tc{NecNoEvil} is the telling case: its only use of
\tc{Ax1Gen} in the port is the empty instance, taken to make the universal property positive,
and here \tc{Ax3} with \tc{Ax4} supplies that.

\subsection{Two neighbouring readings}
\label{sec:neighbours}

\paragraph{\tc{Ax1GenOne}: at least one conjunct.}  Excluding only the empty set is the
move that variant~3 makes for property inclusion, whose including property must be
non-empty. \tc{L} holds ($\Phi:=\Pos$ is non-empty by \tc{Ax3}); with \tc{Ax4},
$\Pos(\lambda x.\top)$ and \tc{Th4} follow as in the \emph{Notes}.  Requirement (ii) holds
as well, with genuine countermodels to \tc{Th4} and to $\Pos(\lambda x.\top)$ from
\tc{Ax1GenOne} and \tc{Ax2a}.  The collapse survives with no proviso at all: the set of
Lemma~\ref{lem:l2} with $\lambda x.\top$ added is non-empty and positive at every world,
so \tc{G\_ex} follows (\tc{G\_ex\_One}) and with it Theorem~\ref{thm:consequences}.  In
variant~3 nothing is refuted at one individual, and with a nominal everything follows: the
one-world argument of Theorem~\ref{thm:v3}(b) gives reflexivity, and with it
$\Pos(\lambda x.\top)$, $\neg\Pos(\lambda x.\bot)$, \tc{AllExist}, \tc{G\_ex} and \tc{Th4}
(\tc{PTop\_One\_deriv} to \tc{Th4\_One\_deriv}), and indeed the full \tc{Ax1Gen}
(\tc{One\_gives\_S}) --- the singleton instance closing positivity under necessary actualist
coextension, which with an empty actual domain reaches every property.  So the searches recorded
in that session could not have succeeded.  Hybrid-free the question stays open, and that is the
same two-layer situation as Section~\ref{sec:plain}: what separates the readings in variant~3
separates them in the object language only.

\paragraph{\tc{Ax1GenBox}: necessarily positive members.}  This reading,
\[
\val{(\mathit{PosProps}\,\Phi \wedge \Box\mathit{PosProps}\,\Phi \wedge
   \mathit{ConjOfPropsFrom}\,\varphi\,\Phi) \supset \Pos\varphi},
\]
attacks the world-dependence instead of the cardinality.  Where a God-like being exists, the
set $\Phi$ of Section~\ref{sec:gex} contains $G$ and ${\sim}G$, and ${\sim}G$ is never
positive where $G$ is.  Requiring the members to be necessarily positive is meant to block
that proof, and it does.  In variant~2, \tc{L}, \tc{Th4} and \tc{Th5}/\tc{MC} in \textsf{KB}
follow as in the \emph{Notes}, while \tc{G\_ex} and \tc{MC} in \textsf{K} and \tc{Th3},
\tc{Th5} in \textsf{S4} have two-world countermodels.  Scott's variant behaves the same
way, so that the countermodels of \cite[Sect.~5.3]{J75} return.  But it fails (ii), for a
reason
that is hard to repair.  The added premiss $\Box\mathit{PosProps}\,\{G\}$ is
$\Box\,\Pos\,G$, which \tc{L} supplies at every world, so \tc{Th4} follows from
\tc{Ax1GenBox} and \tc{Ax2a} alone (\tc{Th4\_rigid} in \lean{}, \tc{Th4\_Box} in \isa{}).  In variant~3 the
axiom is idle: the base axioms force one
world (Theorem~\ref{thm:v3}(b)), and in a one-world model \tc{Ax1GenBox} gives \tc{Ax1Gen}
back (\tc{Ax1Gen\_of\_refl}), hence \tc{Th5\_S4} and \tc{MC\_S4}, which inherit the nominals of
\tc{OneWorld}.  Lemma~\ref{lem:serial} applies here too, so both of those hold already in
\textsf{K}; the hypotheses and the names are left as they stand, the \textsf{S4} framing being
the question under discussion.

\subsection{Assessment}

Against the three requirements, and against the collapse, the four readings stand as follows.

\begin{center}\small\setlength{\tabcolsep}{4pt}
\begin{tabular}{lllll}
\toprule
 & \tc{Ax1Gen} & \tc{Ax1GenBox} & \tc{Ax1GenOne} & \tc{Ax1GenTwo}\\
\midrule
(i) \tc{L}                        & holds & holds & holds & holds\\
(ii)                              & fails & fails & holds & holds\\
\midrule
collapse, variant 2               & derivable & refuted; der.\ in \textsf{KB}
                                  & derivable & derivable$^\ast$\\
collapse, variant 3               & derivable & derivable & derivable$^\dagger$ & derivable$^\dagger$\\
collapse, variant 3$'$            & derivable & derivable & derivable & derivable\\
\tc{Th4}, variant 2               & derivable & derivable & derivable & derivable\\
\tc{Th4}, variant 3               & derivable & derivable & derivable$^\dagger$ & refuted$^\ddagger$\\
\tc{Th4}, variant 3$'$            & derivable & derivable & derivable & derivable\\
\bottomrule
\end{tabular}
\end{center}
The first two rows answer the requirements of this section; the rest report derivability from
the axioms alone.  Verdicts are for \textsf{K} unless a logic is named and are hybrid-free unless marked;
$^\ast$ marks the condition of Corollary~\ref{cor:gextwo}, which a nominal argument discharges
(Section~\ref{sec:plain}); $^\dagger$ a derivation that passes through the one-world argument of
Theorem~\ref{thm:v3}(b) and is therefore not one of the object language, where the question stays
open; $^\ddagger$ a refutation that needs a single individual, two distinguishable ones giving
$\Pos(\lambda x.\top)$ from \tc{Ax2a} and \tc{Ax4} alone; and --- what was not investigated.  Variant~3$'$ is the emendation of
Section~\ref{sec:plain}, with Fig.~8's clause confined to the essence, and there the four
readings cease to differ.  Only its \tc{Ax1GenTwo} entries are separate results
(Theorem~\ref{thm:plain}): \tc{Ax1GenOne} inherits them, licensing every instance
\tc{Ax1GenTwo} licenses; \tc{Ax1Gen} because variant~3$'$ proves every theorem of variant~3;
and \tc{Ax1GenBox} by the argument that makes it idle in variant~3, the axioms forcing one
world.  In variant~3 the collapse is hybrid-free and frame-free under \tc{Ax1Gen}
(\tc{MC\_K}), while under \tc{Ax1GenBox} it passes through the one-world collapse, with
nominals.  With \tc{Th4} taken as a hypothesis in variant~3 instead,
Theorem~\ref{thm:v3-two} applies.

Both cardinality restrictions meet the three requirements in variant~2, and of the two the
two-conjunct reading is the one G\"odel's footnote suggests.  It differs from the other in
dropping the singleton instance, that is, the closure of positivity under necessary
coextension.  The strength of \tc{Ax1Gen} thus has two sources, and they come apart.  The
derivation that violates (ii) needs both degenerate instances, so excluding the empty one is
already enough, and both cardinality readings do that (session \tc{I}).  The
world-dependence of $\Phi$ is what produces the collapse, and only \tc{Ax1GenBox} removes
that.  The price is (ii), and the removal helps only where the collapse does not already come
from the definitions, as it does in variant~3.

\section{The audit of the \emph{Notes}}
\label{sec:audit}

The three questions are settled.  What the criterion of Section~\ref{sec:criterion} adds ---
that every term a proof substitutes at a world-lifted type is a term of $\mathcal{L}$ --- is a
check on the \emph{Notes}' development as a whole.  The proofs of the \emph{Notes} were found by
automation over a shallow embedding, and nothing in that setting requires a witness to stay inside
$\mathcal{L}$; whether any left it is a question about those proofs, not about the
statements.  Two runs of the detector answer it, and Appendix~\ref{app:runs} gives the figures of
both in full.

\paragraph{The \isa{} run} reads the \emph{Notes} themselves: the 30 theories of their session,
rebuilt with proof recording, 427 facts with the definitional equations of the theories included.
Every proof is reconstructed together with the nested bodies that automation leaves in it, and what
the reconstruction cannot reach is counted and its statement scanned, so that the certificate says
what it does not cover.  Counted is what the criterion asks for: the terms a proof substitutes for
the variables of the developments' own facts and for the bound variables of the quantifier rules;
the machinery of the proof methods is descended into but not counted.  Under this reading the 427
proofs carry 1\,177 counted instantiations, and 403 of the 427 are certified.  The 24 that are not
carry, without exception, a term the prover's reconstruction substituted rather than one the argument
needs: fourteen a Skolem function of a world, introduced by the \tc{smt} call which found \tc{L}
of variant~2 in the possibilist and the mixed setting and inherited by the six results built on it
in each; ten a constant of HOL's library that \tc{auto} or \tc{blast} put where the argument has
an equality, the universal property or an arbitrary property.  The flag records a property of the
proof object, not a need of the argument: the hand-written proofs of the same statements, in the
port and in the cross-check (session \tc{IsarAudit}), substitute the set of positive properties and
Leibniz identity with the individual itself, and the detector certifies every one of them.  So
every statement the \emph{Notes} prove has a proof \isa{} certifies: its own for 403 of them, one
written for the purpose for the other 24.  The \textsc{nominal} flag fires on no counted witness.

\paragraph{The \lean{} run} reads the port, whose proofs are written by hand against the same
statements and record every instantiation as written: 30 modules, one per theory, 399 facts, of
which 294 are proofs written out by hand and 105 the equations \lean{} generates for the
definitions; the 294 are the results the \emph{Notes} prove, five of them stated there with a proof
an external prover had reported and no trusted tactic replayed.  No proof in the port is flagged,
not even for the existence predicate, and none depends on \tc{sorry}.  The 399 proofs make 318
distinct instantiations of property and proposition variables, of 44 shapes, every one a term of
$\mathcal{L}$ --- most often the complement of a property variable, $G$ itself, the universal and
the empty property, necessary existence, a necessary inclusion, self-identity and Leibniz identity
with a fixed individual.  This matters for the dependency accounting of the port: what
\tc{\#print axioms} reports there are dependencies of arguments conducted in the modal logic, not
artefacts of the embedding.

The audit also isolates the one place where the \emph{Notes} record a frame condition that
is not needed.  The \emph{Notes} and the port prove the possible existence of a God-like
being, \tc{Th4}, in the mixed-quantifier setting of the second variant
(\tc{GoedelVariantHOML2AndersonQuant}) with symmetry (\cite[Footnote~25]{J75}: the proof
``requires symmetry of the accessibility relation (or modal axiom B), as an additional
dependency'').  It holds in \textsf{K}.  A first proof used a nominal.  By \tc{Ax1Gen}
some successor $v$ of $w$ carries a possibilist God-like being.  The property ``being at a
successor of $w$'', $\lambda x\,u.\ w\,r\,u$ --- that is, $\Diamond^{-1}\mathsf{w}$ --- is
positive at $w$ (\tc{Ax4} from the universal property), and hence at $v$ (\tc{Ax2b}).  No
successor of $v$ is a successor of $w$, else \tc{Th2} would place an actual God-like being
there.  So at $v$, \tc{Ax4} turns that property into the empty one, against \tc{Ax2a}.
The detector flags the witness \textsc{access rigid} (\tc{Th4\_K\_hybrid} in the sources).
The flag prompted the search for a hybrid-free proof, and there is one, simpler than the
first.  The argument of Section~\ref{sec:gex}, run with possibilist quantifiers, yields a
possibilist God-like being at $w$ \emph{itself}. By \tc{Ax3} and \tc{Th1} it has necessary
existence, so \tc{Th2} places actual God-like beings at every successor of $w$.  And $w$
has a successor, since at a dead end \tc{Ax1Gen} makes every property positive, against
\tc{Ax2a} --- Lemma~\ref{lem:serial} is the same step with \tc{Ax3} and \tc{Ax4} in place of
\tc{Ax1Gen}.  This proof (\tc{Th4\_K}) consumes \tc{Ax1Gen}, \tc{Ax2a}, \tc{Ax2b}, \tc{Ax3},
\tc{Ax4}, no frame condition, and only object-language witnesses; it is replayed in \isa{}
over an embedding in logic \textsf{K}.  Footnote~25 of the \emph{Notes} is thereby
corrected in the object language: symmetry is not needed for \tc{Th4} in the
mixed-quantifier setting, and neither is reflexivity.  The footnote attributes the
dependency to \tc{Th2}; in the AFP sources, as in the port, \tc{Th2} is frame-free and the
citation sits on \tc{Th4}.

The two runs are about different proofs --- the \isa{} run about the originals, the \lean{} run
about proofs written for the purpose --- and each is complete over its own; neither replaces the
other.  Where they differ is in how much each records, 1\,177 counted instantiations against 318,
and the difference is one of degree: the port's proofs apply the source facts directly, while
\isa{}'s automation reaches the same facts through the quantifier rules, whose witnesses are
counted.  It follows how a proof was made rather than which system made it.

\section{Discussion}
\label{sec:discussion}

\paragraph{The strength of \tc{Ax1Gen}.} The third-order form of G\"odel's footnote that the
conjunction axiom should cover ``any number of summands'' is due to Anderson and
Gettings~\cite{AndersonGettings1996}, and to Fitting~\cite{Fitting2002}
(Section~\ref{sec:setting}); \cite{J75} adopts it as \tc{Ax1Gen} and introduces it to derive
$\Pos\,G$ (Sect.~4.4); in
Sects.~4.5 and~4.6 the provers use it directly, for \tc{Th4} and for \tc{NecNoEvil}.  It
remarks that ``it is therefore possible to omit \tc{Ax1Gen} \dots\ and simply postulate
\tc{L} as an axiom''~\cite[Sect.~4.4]{J75}, which is what Scott's \tc{A3} does.  The axiom
says more, at two levels.  The first is the two degenerate instances of requirement (ii),
$\Phi=\emptyset$ and $\Phi=\{G\}$, where no world-dependence is involved.  Neither follows
from the binary \tc{Ax1}, and with \tc{Ax2a} they already give the possible existence of a
God-like being without \tc{Ax4}.  In variant~3, whose weakened \tc{Ax4} does not apply to
an empty $G$, this is what makes \tc{Th4} provable at all under Fig.~8 as it stands;
Section~\ref{sec:plain} locates that obstacle in where the clause of Fig.~8 sits rather than in
the conjunction axiom.  The conjunction axiom is thus
also an extensionality principle for positivity: the empty conjunction counts, and
positivity does not distinguish necessarily coextensive properties.  The second level is
Theorem~\ref{thm:gex}.  The set $\Phi$ ranges over world-dependent collections of
properties, so the axiom can transport a fact about the current world --- here, that no
God-like being exists --- into the positivity of a property, which \tc{Ax4} then turns
against $\Pos\,G$.  The step needs neither symmetry nor reflexivity: the God-like being is
not obtained at a successor world and brought back, it is obtained at the world itself.
In this respect the derivation of Theorem~\ref{thm:gex} hardly argues modally: its conclusion is
obtained at the world of evaluation rather than carried there from another, and it uses no frame
condition.  The operators are not idle --- the inclusion in \tc{Ax4} and
$\mathit{ConjOfPropsFrom}$ are both boxes --- but nothing is transported across the
accessibility relation.

\paragraph{Modal collapse without symmetry.}
Modal collapse, $\val{\varphi\supset\Box\varphi}$, is the best-known objection to
G\"odel's argument, though arguably a consequence he accepted, since it comes with the
maximality of the positive properties~\cite{MuehlenbeckBenzmueller2026}.  \cite{J75} shows
it to follow in \textsf{KB}.
Theorem~\ref{thm:consequences}(b) shows that with \tc{Ax1Gen} it follows in \textsf{K},
and Theorem~\ref{thm:consequences}(c) draws the semantic conclusion: every world sees only
itself.  In variant 3 there is only one world altogether (Theorem~\ref{thm:v3}(b)), and
everything that exists exists actually (Theorem~\ref{thm:v3}(c)).  The axioms thus erase
the distinction between possibilist and actualist quantification, which the \emph{Notes}
take care to maintain --- though the erasure belongs to that variant with those quantifiers and
not to the axioms as such: under the emendation of Section~\ref{sec:plain}, with the quantifier
in the inclusion read possibilistically, \tc{AllExist} has a genuine countermodel.  Theorem~\ref{thm:nonecexist} locates the source in the
definitions.

\paragraph{What the criterion did.} Twice in this article a first proof used a nominal ---
\tc{Th4} of the mixed-quantifier setting (Section~\ref{sec:audit}) and the uniqueness of
essences (Lemma~\ref{lem:reach}) --- and a hybrid-free proof was found in its place.  Both hybrid
proofs were correct: they establish the theorem in every model of HOL, standard or Henkin.  The
hybrid-free proof establishes it also for models whose property domains are
closed under the object-language operations only, the global identity of properties among them
--- the semantics under which a modal logician would read the axioms, which are written in
\textsf{K} with the global modality.  The soundness and completeness of the
embedding~\cite{R45,J26} are relative to the other semantics: the Henkin model
of~\cite[Def.~4.2]{J26} takes the modal model's propositional domain $P$ for the domain of
type $i\to o$, and since every term of HOL must denote in a Henkin model, that domain must
already contain the nominals and be closed under the operations of HOL.  Of the modal models
the definition asks only that $P$ contain the sets a formula of the modal language defines,
and a singleton set of worlds need not be one of those, although the term denoting it has
type $i\to o$, equality being primitive at every type.  The clause quoted in the
introduction is therefore what the construction requires rather than a gloss on it, and the
theorem statements do not carry it: \cite[Thm.~4.4]{J26} is stated without it, and
\cite[Thm.~4.9]{J26} does not need it, since the modal model it builds from a Henkin model
takes that type's whole domain for $P$.  It bites only where something quantifies at
that type, so the first-order and propositional fragments, \cite[Cor.~4.11 and~4.12]{J26},
are untouched.  Faithfulness in that sense is what the introduction asserts, and it is
compatible with everything said here.  What the
nominal proofs establish beyond that is genuine
and is recorded (\tc{OneWorld}, \tc{NoNecExist}): the models of HOL leave the axioms
exactly one world.  Of the answers to the three questions under \tc{Ax1Gen}, none is known
only with a nominal; under \tc{Ax1GenTwo} the unconditional form of the answer to
question~2 is (Theorem~\ref{thm:v3-two}(c)), with the clause of Fig.~8 left in \tc{Ax4};
confined to the essence it is hybrid-free again (Section~\ref{sec:plain}).  The frame statements \tc{OneWorld} and
\tc{NoNecExist} are hybrid by nature, and so is what is drawn from them alone: \tc{Th3} of
variant~3 without the conjunction axiom (Theorem~\ref{thm:v3}(a)), and \tc{Th5\_S4},
\tc{MC\_S4} under \tc{Ax1GenBox}.  Nitpick searches standard models of HOL of bounded size,
which are Henkin models in particular.
Its countermodels therefore refute a statement in HOL and hence in the embedding.  A
countermodel to the object-language reading of a statement would have to omit the nominal
properties from its domains, which no Henkin model of the embedding does; no model finder
over the shallow embedding can produce one.  For the statements treated here none is
needed, since the theorems hold.

\paragraph{A method, not only a verdict.}  The criterion of
Section~\ref{sec:criterion} asks of a proof term only which instantiations it makes at
world-lifted type, so it is not tied to these arguments: it applies to any development over the
shallow embedding, and to embeddings of other non-classical logics whose object language is a
fragment of the host.  Set beside a replay in an explicit proof calculus --- which settles the
question by construction, and in~\cite{KanckosPaleo2017} for a rigid higher-order modal logic,
rigidity being an explicit assumption there that most work on the argument leaves tacit --- it is
the cheaper and the weaker route.  It yields no object-level derivation and certifies that a proof
stays within the object language, not that the theorem has such a derivation: a proof it rejects
may still have one, as the two cases above show.  What it offers in exchange is scale --- here the
427 proofs of the \emph{Notes} and the 399 of the port --- and it costs nothing to re-run when a
development changes.

\paragraph{What each system contributed.}
Everything this article establishes is established in both systems.  Each of the sixteen \isa{}
sessions has a \lean{} module for the same setting, and every theorem of the one is a theorem of
the other --- Theorem~\ref{thm:plain} in all four parts, the results of the two further quantifier
settings (Section~\ref{sec:two}) and those of the neighbouring readings
(Section~\ref{sec:neighbours}) included; only the countermodels are Nitpick's, and the \lean{}
modules record them as placeholders.  Two statements Nitpick leaves undecided --- \tc{L} and
\tc{Th4} for variant~3 under \tc{Ax1GenTwo}, the latter where every world has an actual
individual --- are theorems with nominals, in both systems (\tc{L\_Two\_hybrid},
\tc{Th4\_Two\_actual\_hybrid}, session \tc{F}).  The criterion of
Section~\ref{sec:criterion} is implemented for both: over the four theories of the three questions the two
detectors flag the same seventeen results and certify the same 105, and in the four further
theories both flag nothing (Section~\ref{sec:verification}); and
over the \emph{Notes} the \isa{} run reads the original proofs, certifies 403 of the 427 and names
what it does not (Section~\ref{sec:audit}) --- a reading the port, whose proofs were written for
the purpose, cannot give; and the account of which postulates a theorem uses, which the
\lean{} side gives through \tc{\#print axioms}, the \isa{} side gives by folding over the recorded
proof.  What \lean{} adds is an independent check --- a second kernel, a second embedding, proofs
written by hand against the same statements --- and proof terms in which every instantiation
stands as written.

The converse does not hold.  The negative results are Nitpick's: 72 model searches, each certified
by the build through an \tc{expect} annotation, 34 of them producing a genuine countermodel.
\lean{} has no comparable facility, which is why the one countermodel needed on that side is
written out by hand and checked (\tc{Ax1GenTwoVariant3Model}).  Nothing in this article that refutes
a statement could have been obtained in \lean{}, and the proof search that found the positive
results in the first place --- \tc{sledgehammer} with a replay by \tc{metis} or \tc{smt} --- has no
counterpart there either; where the port needed those proofs, they were written by hand.  Both
observations are about the tools as this study used them, not verdicts on the systems.

\paragraph{What the \textsf{S4} restriction shows.}
The \emph{Notes} ask whether \textsf{S4} will do in place of \textsf{KB}.  The two are
incomparable --- \textsf{S4} adds reflexivity and transitivity, \textsf{KB} symmetry --- and
what the question turns on is that \textsf{S4} does without the symmetry their proof of
\tc{Th3} uses.  The answer is that the question dissolves: in the variants with \tc{Ax1Gen} the results do
not depend on the frame at all, and the frame is trivialised by the axioms.  The restriction
this article adopts does not bring the frame back either --- under \tc{Ax1GenTwo} the answers are
frame-free as well (see Table~\ref{tab:deps} on p.~\pageref{tab:deps}) --- so it is only under \tc{Ax1GenBox}, which
attacks the world-dependence instead of the cardinality, that \textsf{S4} becomes a question
with an answer, and there the answer is the one the \emph{Notes} expected: a countermodel.

\paragraph{The place of this work in LogiKEy.}
The \emph{Notes}, the port and this article apply the method of the LogiKEy
framework~\cite{J48}: an object logic is embedded shallowly in HOL, and the reasoning and model
finding of a proof assistant are used on it as they stand.  The framework's strength is that the
embedding gives away nothing of HOL --- the same tools reason about the object logic, about the
theory it is used for, and about the semantics that interprets both.  For most of the logics the
framework has been applied to, propositional and first-order modal and deontic logics with a
complete Kripke semantics, nothing is blurred by this: whatever the embedding proves the object
calculus proves, whichever instances a proof happens to use on the way, and quantification over
individuals ranges over the model's own domain however the embedding names an individual.  The line
blurs where the object logic quantifies over propositions or properties, as G\"odel's does.  There
the embedding lets those quantifiers range over every HOL term of the lifted types, nominals and
satisfaction operators among them, and a derivation that instantiates a property variable with
such a term is a derivation of HOL about the models, not one the object logic could reproduce.  A
result proved in the embedding is then a result of HOL; whether it is also a result of the object
logic is a question about the terms its proof substitutes, and the criterion of Section~\ref{sec:criterion}, with the
detector that applies it to every proof of a development, makes that question one that is answered
mechanically, for embedded logics of higher order in general.  The \emph{Notes} turn out to have
such a proof for every statement they prove; where a proof has none --- the nominal arguments of
Section~\ref{sec:v3}, the collapse of the frame --- the detector says so, and the result stands as
a fact about the models rather than a theorem of the logic.  Both kinds of result have their place
in the framework; what was missing, for its higher-order applications, was the means to tell them
apart after the fact.

\section{Verification}
\label{sec:verification}

The results are formalised twice.  In \isa{}, following the \emph{Notes}, four
theories, one per development, each self-contained: it carries its own copy of the
embedding of Section~\ref{sec:setting}, restricted to \textsf{S4} for the three developments
of that section and to \textsf{K} for the mixed-quantifier setting of
Section~\ref{sec:audit}; and four more for the copies of Section~\ref{sec:copies}, with the
frame of the \emph{Notes}' theory each, which the answers there do not use.  The new proofs are written out in Isar,
with the automation of the \emph{Notes} --- \tc{blast}, \tc{metis}, \tc{smt} --- closing the
individual steps.  Neither choice is cosmetic: keeping their automation makes the new proofs the
same kind of proof as the originals, and the Isar structure keeps each step small enough to be read
off the proof term one at a time.  In \lean{}~\cite{Lean4}, eight modules mirroring the theories
item by item, with explicit proof terms.  \lean{}'s \tc{\#print axioms} reports for every
theorem the postulates its proof depends on.  Since the frame conditions are postulated as
named axioms, this certifies the ``no frame condition'' claims above, and the detector of
Section~\ref{sec:criterion} certifies the ``hybrid-free'' ones.  Table~\ref{tab:deps}
lists both.  The detector has been run over the \isa{} side too, the eight theories rebuilt with
\tc{record\_proofs = 2} for the purpose (session \tc{OwnAudit}), and read in full as
Appendix~\ref{app:runs} describes.  Of their 197 results it flags seventeen as hybrid and certifies
180, among them all 75 of the four further theories and six with the informational flag for
the existence predicate.  \tc{NoNecExist} and the
twelve results of variant~3 whose proofs run through it --- \tc{Th3}, \tc{Th5}, \tc{MC},
\tc{MCdual}, \tc{Triv}, \tc{EssDetermined}, \tc{EssMem}, \tc{EssMemI}, \tc{EssSupp},
\tc{UniqueEss1}, \tc{UniqueEss2}, \tc{UniqueEss3} --- carry its two nominals $\lambda z\,u.\ u = v$
and $\lambda z\,u.\ z = x \wedge u = w$ directly, since the detector reads the proof of
\tc{NoNecExist} where it is used; \tc{OneWorld} in each of the three \textsf{S4} theories
instantiates the collapse or the essence at a nominal; and \tc{Th4\_K\_hybrid} carries the
accessibility relation at a fixed world, as it was written to.  The \lean{} detector flags
\tc{NoNecExist}, the three \tc{OneWorld} and \tc{Th4\_K\_hybrid} directly and reports the twelve as
inheriting; the two sets coincide, and both are what Table~\ref{tab:deps} records.

Two proofs of these theories are written out in Isar where the \emph{Notes} use automation, for a
reason the run over the \emph{Notes} shows.  The \tc{smt} proof of \tc{L} carries a Skolem function
of a world (Section~\ref{sec:audit}), and copied into three of the four theories of the three questions it would carry it
under twenty results, \tc{Th4\_K} among them; and \tc{UniqueEss3} of variant~2, left to
\tc{sledgehammer}, runs through \tc{OneWorld} and instantiates \tc{MC} with the nominal
$\lambda u.\ u = w$, where the \lean{} proof instantiates the essence at Leibniz identity with $x$.
Written out with the instantiation the argument makes --- \tc{L} from \tc{Ax1Gen} at the set of
positive properties, \tc{UniqueEss3} as in \lean{} --- both are certified.

The dependency account is available on both sides as well.  A recorded \isa{} proof term names
every postulate its proof appeals to, so folding over its atoms gives what \tc{\#print axioms}
gives, once the primitive rules of the meta logic, the defining equations and the class arities
are set aside (\tc{AxiomDeps.ML}, run in the same two sessions).  Where both were taken the two
agree, and the table's entries can be read off either.  The account is what settles, for the
\emph{Notes}, that symmetry is the only frame condition their results use save one: it occurs in
61 of them, transitivity only in their test theories, and reflexivity there and in \tc{Th3} of
their \textsf{S4} variant-1 theory.  In the sources the suffix \tc{\_K} marks the hybrid-free proofs of
Section~\ref{sec:v3}, which stand beside the hybrid proofs of the same statements rather than
replacing them, so that both are in Table~\ref{tab:deps} and either can be inspected; by
Lemma~\ref{lem:serial} they are frame-free.  Both developments build in seconds (Isabelle2025-2; Lean~4.33.1, no library
beyond the core) and are available with this article.

\paragraph{Reading Table~\ref{tab:deps}.}
Every entry the detector flags is a proof of this article, not a proof the \emph{Notes} give; the one statement among them that the \emph{Notes} prove,
\tc{Th4} of the mixed setting, also has the hybrid-free proof \tc{Th4\_K}.  The claim about the
\emph{Notes} in Section~\ref{sec:audit} is therefore untouched by these flags.  ``Original'' marks
this article's \textsf{S4} proof of a statement, as against its \tc{\_K} proof; the \textsf{S4}
originals stand in the third block, and \tc{Rrefl} occurs there only, everywhere else
Lemma~\ref{lem:serial} replacing it, while \tc{Rtrans} occurs nowhere.  ``Inherits'' means that
the proof uses a flagged result; where \tc{metis} copied that result's witnesses into the proof
term the result is flagged directly, as \tc{Th3} and the original \tc{MC} of variant~3 are.  Only
for that original \tc{MC} do the two systems differ in what a proof uses: the \isa{} proof also
takes \tc{Ax1Gen}.  In variant~2, \tc{UniqueEss1} holds hybrid-free with no axiom and
\tc{UniqueEss2} is refuted, as in the \emph{Notes}.

\begin{table}[p]
\centering\footnotesize\setlength{\tabcolsep}{2pt}
\begin{tabular}{lllll}
\toprule
 & variant 2 & variant 3 & Scott + \tc{1Gen} & hybrid?\\
\midrule
\tc{G\_ex} & \tc{1Gen 2a 4} & \tc{1Gen 2a 4} & \tc{1Gen A1 A2} & no\\
\tc{Th3}/\tc{T3} & \tc{1Gen 2a 4} & \tc{1Gen 2a 4}$^\S$ & \tc{1Gen A1 A2} & no\\
\tc{Th5} & \tc{1Gen 2a 4} & \tc{1Gen 2a 4}$^\S$ & --- & no\\
\tc{MC} & \tc{1Gen 2a 2b 4} & \tc{1Gen 2a 2b 4}$^\S$ & \tc{1Gen A1 A2 A4} & no\\
\tc{UniqueEss3} & none & --- & --- & no\\
\tc{UniqueEss3\_K} & --- & none & --- & no\\
\tc{Serial} & --- & \tc{2a 3 4} & --- & no\\
\tc{AllExist}, \tc{Reach\_K}, \tc{EssMemI\_K} & --- & \tc{1Gen 2a 3 4} & --- & no\\
\tc{UniqueEss1\_K}, \tc{UniqueEss2\_K} & --- & \tc{1Gen 2a 2b 3 4} & --- & no\\
\midrule
\tc{G\_ex}, \tc{Th3} (poss.) & \tc{1Gen 2a 4} & \tc{1Gen 2a 4} & --- & no\\
\tc{G\_exP} (mixed) & --- & \tc{1Gen 2a 4} & --- & no\\
\tc{Reach}, \tc{EssMemI} (poss., mixed) & --- & \tc{1Gen 2a 3 4} & --- & no\\
\tc{UniqueEss1/2} (poss., mixed) & --- & \tc{1Gen 2a 2b 3 4} & --- & no\\
\midrule
\tc{OneWorld} (relative) & \tc{1Gen 2a 2b 4} & --- & \tc{1Gen A1 A2 A4} & \textsc{nominal}\\
\tc{NoNecExist} & --- & \tc{Rrefl} & --- & \textsc{nominal}\\
\tc{Th3} (via \tc{NoNecExist}) & --- & \tc{3 Rrefl} & --- & \textsc{nominal}\\
\tc{OneWorld} (absolute) & --- & \tc{2a 3 4 Rrefl} & --- & \textsc{nominal}\\
\tc{EssDetermined}, \tc{UniqueEss1/2} & --- & \tc{1Gen 2a 3 4 Rrefl} & --- & inherits\\
\tc{Th5} (variant 3, original) & --- & \tc{1Gen 2a 3 4 Rrefl} & --- & inherits\\
\tc{MC} (variant 3, original) & --- & \tc{2a 3 4 Rrefl} & --- & \textsc{rigid}\\
\midrule
\tc{Serial} & \tc{1Gen 2a} & --- & --- & no\\
\tc{Th4\_K\_hybrid} & \tc{1Gen 2a 2b 3 4} & --- & --- & \textsc{access rigid}\\
\tc{Th4\_K} & \tc{1Gen 2a 2b 3 4} & --- & --- & no\\
\midrule
\tc{L\_Two} & \tc{Two 2a 3 4} & \tc{Two 2a 3 4 Th4} & \tc{Two A1 A2 A5} & no\\
\tc{G\_ex\_Two}, \tc{Th3}/\tc{T3}, \tc{Th5} & \tc{Two 2a 3 4}$^\ast$ & \tc{Two 2a 3 4 Th4}$^\dagger$ & \tc{Two A1 A2 A5}$^\ast$ & no\\
\tc{MC\_Two} & \tc{Two 2a 2b 3 4}$^\ast$ & \tc{Two 2a 2b 3 4 Th4}$^\dagger$ & \tc{Two A1 A2 A4 A5}$^\ast$ & no\\
\tc{AllExist\_Two}, \tc{Reach\_Two} & --- & \tc{Two 2a 3 4 Th4}$^\dagger$ & --- & no\\
\tc{UniqueEss1/2\_Two} & --- & \tc{Two 2a 2b 3 4 Th4}$^\dagger$ & --- & no\\
\tc{G\_ex\_Two\_hybrid} & --- & \tc{2a 3 4 Th4} & --- & inherits\\
\midrule
\tc{PTop} & --- & \tc{3 4} & --- & no\\
\tc{L\_Two}, \tc{Th4\_Two} & --- & \tc{Two 2a 3 4} & --- & no\\
\tc{G\_ex\_Two}, \tc{Th3\_Two}, \tc{Th5\_Two} & --- & \tc{Two 2a 3 4}$^\ast$ & --- & no\\
\tc{MC\_Two} & --- & \tc{Two 2a 2b 3 4}$^\ast$ & --- & no\\
\tc{AllExist}, \tc{Reach\_K}, \tc{EssMemI\_K} & --- & \tc{Two 2a 3 4}$^\ast$ & --- & no\\
\tc{UniqueEss1\_K}, \tc{UniqueEss2\_K} & --- & \tc{Two 2a 2b 3 4}$^\ast$ & --- & no\\
\bottomrule
\end{tabular}
\caption{Postulates used by the \lean{} proofs, as reported by \tc{\#print axioms}, and the
  verdict of the hybrid-witness detector.  Axiom names drop their \tc{Ax} prefix; signature
  constants, the non-emptiness of $e$ and \lean{}'s three standard axioms are omitted.
  \tc{Rrefl} is reflexivity.  A hypothesis rather than an axiom is marked: $^\ast$ that every
  world has an actual individual, $^\dagger$ the side condition of Theorem~\ref{thm:v3-two}.
  $^\S$ In variant~3 these are \tc{Th3\_K}, \tc{Th5\_K} and \tc{MC\_K} (Section~\ref{sec:v3}).
  Blocks, from the top: the answers to the three open questions; their possibilist and
  mixed-quantifier copies (Section~\ref{sec:copies}); the collapse results, which need nominals;
  the \textsf{K}-proofs of Section~\ref{sec:audit}, mixed setting of variant~2; the answers under
  \tc{Ax1GenTwo} (Section~\ref{sec:ax1gentwo}); variant~3 with the non-emptiness clause confined
  to the essence (Section~\ref{sec:plain}), where \tc{Th4} is a theorem.}
\label{tab:deps}
\end{table}

\paragraph{Sources.}
The eight \isa{} theories and their \lean{} counterparts, and everything else named here,
accompany this article as ancillary files:
\begin{center}
\tc{GoedelVariantHOML2inS4oneFile}\quad
\tc{GoedelVariantHOML3inS4oneFile}\\[2pt]
\tc{ScottVariantHOMLAx1GeninS4oneFile}\quad
\tc{GoedelVariantHOML2AndersonQuantTh4inK}\\[2pt]
\tc{GoedelVariantHOML2possInS4oneFile}\quad
\tc{GoedelVariantHOML3possInS4oneFile}\\[2pt]
\tc{GoedelVariantHOML3possUniqueEssOneFile}\\[2pt]
\tc{GoedelVariantHOML3AndersonQuantUniqueEssOneFile}
\end{center}
Both detectors are included, with their reports: \tc{HybridAudit.ML} with the \isa{}
session that runs it over the theories of the \emph{Notes} and its report, and
\tc{HybridAudit.lean} with its report over the port and over these theories, and the
census of both.  Two further \isa{} sessions carry the runs of Section~\ref{sec:audit} outside the
\emph{Notes}' own session: over the 19 theories of the cross-check, and over
\tc{MCbyHand}, the proof of modal collapse written out here without automation.  The emendations of
Section~\ref{sec:emendation} have their own files.
Sixteen \isa{} sessions (\tc{alternatives/}, \tc{A}--\tc{P}), one reading of the conjunction
axiom and one quantifier setting each as the \tc{README} of that directory sets out, carry the
outcome of all 72 Nitpick searches as an \tc{expect} annotation --- 34 expecting a genuine
countermodel, 23 expecting none within the given cardinalities, 15 recording that the search gives
up --- so that the build certifies them; the positive derivations there are Isar proofs.  The certificate is that build and nothing else: a
countermodel or a satisfying model is reproduced by running these sessions under
Isabelle2025-2 with \tc{isabelle build -D .}, and neither the \lean{} modules, which have no
model finder, nor the reports shipped beside them replace that run.  A statement the
\emph{Notes} refute is recorded in a \lean{} module as an \tc{example} closed by a placeholder,
so that the build reports it as using \tc{sorry}; no theorem of any module depends on it, as
\tc{\#print axioms} confirms.  A \lean{} module accompanies each session; the eight that carry
results named in this article are
\begin{center}
\tc{Ax1GenTwoVariant2} (Lemma~\ref{lem:ltwo}, Theorem~\ref{thm:gextwo})\quad
\tc{Ax1GenTwoScott} (Theorem~\ref{thm:scott-two})\\[2pt]
\tc{Ax1GenTwoVariant3Th4} (Theorem~\ref{thm:v3-two})\quad
\tc{Ax1GenTwoVariant3Model} (Proposition~\ref{prop:v3two})\\[2pt]
\tc{Ax1GenTwoVariant3Plain} (Theorem~\ref{thm:plain}, parts (a)--(c))\\[2pt]
\tc{Ax1GenBoxVariant2}\quad \tc{Ax1GenBoxVariant3}\quad \tc{Ax1GenBoxScott}
\end{center}
the last three for the neighbouring reading.  The other nine, for sessions \tc{B}, \tc{F},
\tc{G}, \tc{H}, \tc{I} and \tc{M}--\tc{P}, are
\begin{center}
\tc{Ax1GenRigidVariant2}\quad \tc{Ax1GenTwoVariant3}\quad \tc{Ax1GenOneVariant2}\quad
\tc{Ax1GenOneVariant3}\\[2pt]
\tc{Th4FromConjunction}\quad \tc{Ax1GenTwoVariant2Poss}\quad
\tc{Ax1GenTwoVariant2AndersonQuant}\\[2pt]
\tc{Ax1GenTwoVariant3PlainPoss}\quad \tc{Ax1GenTwoVariant3PlainAndersonQuant}
\end{center}
The structure-preserving \lean{} port of the complete dataset of the
\emph{Notes}, over which Section~\ref{sec:audit} runs, is described separately.

\paragraph{Tool support.}
The \lean{} formalisation, the hybrid-witness detector and parts of the \isa{} formalisation
were developed interactively with an AI coding assistant (Anthropic Claude), which was also
used to support the drafting of this article.  The scale of that interaction is worth stating rather
than leaving to be imagined: the files of this article are the subject of 60 commits made on seven
days of a single continuous session, and across that session --- which also covers the port this article builds on and a companion paper on the interaction itself --- the author sent over 900 messages
and the assistant took some 14\,000 turns.  Every positive result is stated as a theorem and
checked by the kernel of \isa{} or \lean{}, every negative one by a Nitpick countermodel certified
in the build or by an explicit model, and the detector's verdicts are reproducible from the
sources.  That is the reason for reporting the figures: the claims do not rest on the interaction
but on artefacts a reader can rebuild.  The detector was corrected several times while this article was
written, three times on the strength of an external reader's reviews, and its verdicts changed with each
correction; every figure in this article is from the instrument as shipped, and the companion paper
records the corrections; the author checked and
revised every claim and is responsible for the content.

\appendix

\section{The detector}
\label{app:detector}

The criterion of Section~\ref{sec:criterion} applies to any proof term, and the two systems
store theirs differently.

\isa{} records proofs under \tc{record\_proofs = 2}, compactly and without the instantiated
terms; \tc{Thm.reconstruct\_proof\_of} recovers them, supplying the sort hypotheses of the
theorem's type variables that a recorded proof also proves.  The connectives of the embedding are
\tc{abbreviation}s and so unfolded in the internal term; they, and only they, are folded back
first, with \isa{}'s reverse rules re-formed so that a quantifier connective folds under a formula
body, which \tc{Proof\_Context.contract\_abbrevs} does not do; an abbreviation of an audited theory
stays expanded, so that it cannot hide a witness.  The world type, the accessibility relation and the
existence predicate are identified by name, \tc{i}, \tc{R} and \tc{existsAt}, each theory of the
\emph{Notes} declaring its own; those three names are the only ones either detector relies on.  The
\isa{} detector (\tc{HybridAudit.ML}) traverses the reconstructed proof of every theorem and the
nested lemmas automation creates, reconstructing each nested body with its \tc{MinProof} leaves
replaced by placeholders, so that a leaf does not discard the body around it; the leaves are
counted, attributed to the rule they are a premise of, and their statements scanned.  Every
collected term of world-lifted type is paired with the rule that introduced it, the head of its
spine (\tc{Proofterm.any\_head\_of}), and with its position among that rule's arguments; which
positions count is stated in Appendix~\ref{app:runs}, the witness position of a quantifier rule
being told from its body position by the rule's own statement.  Definitions are not unfolded
inside a witness; instead the right-hand side of every definition and abbreviation of the audited
theory is scanned by the grammar and reported, and a flagged definition marks every witness that
uses its constant --- in the theories of the \emph{Notes} and in this article's own none is flagged.
A second traversal collects, for each theorem, the facts of the audited theories
its proof uses, which gives the inheritance verdict; a third collects the postulates it depends on
(\tc{AxiomDeps.ML}), the counterpart of \lean{}'s \tc{\#print axioms}.

In \lean{} no folding is needed, the connectives being definitions there --- a choice of device,
not of system: the \emph{Notes} introduce the connectives by \tc{abbreviation} and $G$, the essence
and necessary existence by \tc{definition}, whereas the port makes all of them definitions.  A
small metaprogram (\tc{HybridAudit.lean}, some 300 lines) traverses the proof term of every
theorem of a module.  In a first mode it lists the instantiations (the \emph{census}): every
argument of lifted type of an application that is a proof, other than a bare variable, and every
\tc{let}-bound term of lifted type.  In a second mode it examines each of them --- a
$\lambda$-abstraction, a constant, or a partial application such as $(=)\,w$, the nominal in
$\eta$-reduced form, which is $\eta$-expanded first --- passing over the body and motive positions
of meta-level constructs such as $\exists v{:}i$, casts and \tc{have}, but not their witness
positions: the witness of \tc{Exists.intro} and the value a \tc{have} binds are examined like any
other argument.  It applies the grammar of Section~\ref{sec:criterion} with its five flags; the
existence predicate outside the actualist quantifiers is reported for information only, being
definable as $\exists\actE y.\ y = z$.  Inside a witness every definition other than a connective of the
embedding is unfolded, to a depth of 32, so that a nominal cannot hide behind a helper definition
or behind $\neq$; the connectives themselves are not, since $\Box\varphi$ mentions the
accessibility relation in its definition and is a modal formula all the same.  A quantifier of the
embedding instantiated at the type of worlds quantifies over worlds behind object-language notation
without producing a witness of world-lifted type; the \lean{} detector therefore counts such
instantiations over every statement and proof term as well and reports the count, which is zero in
the port and here.  In \isa{} an abbreviation leaves no trace in the term, so a frame condition
$\forall x.\ x\,r\,x$ and a lifted quantifier over worlds are the same term there; the \isa{}
detector counts over the witnesses, where the grammar decides, and the count is zero in the
\emph{Notes} as well.  A theorem is hybrid if its own proof is flagged,
and inherits hybridness if it is not itself flagged but its proof uses one that is; both detectors
report that closure.

Both are run, on every audit, against the same suite of probes (\tc{lean/HybridProbes.lean},
\tc{audit/probes/HybridProbes.thy}): 23 witnesses that leave $\mathcal{L}$ --- nominals in
$\eta$-reduced form, behind a definition, a double negation, a \tc{let}; the accessibility relation
applied partially and under the converse modality; satisfaction operators; the universal, the
existential and the difference modality, and validity inside a witness; a quantifier of the
embedding at the type of worlds; a Skolem function of a world; a choice over worlds; an arbitrary
world --- each of which must be flagged, and 17 terms of $\mathcal{L}$, among them a raw equation
between individuals, an equation between propositions and one between properties, a raw disjunction and the existence predicate under the actualist quantifier,
none of which may be.  A verdict that misses fails the run, as a Nitpick \tc{expect} annotation
fails a build (\tc{probes-report.txt} on each side).

\section{The two runs in detail}
\label{app:runs}

\paragraph{What the \isa{} run reads.}  \isa{} stores a proof compactly, and the detector reconstructs the proof
of every theorem and the nested bodies that automation leaves in it.  A recorded body can contain
\tc{MinProof} leaves --- sub-derivations \isa{} does not keep even at \tc{record\_proofs = 2}:
equations of the rewriter's scaffolding, the reflexivity contradictions \tc{metis} closes on,
congruences of the simplifier --- and \tc{Proofterm.reconstruct\_proof} returns \tc{MinProof} for
the \emph{whole} body at the first such leaf.  Each leaf is therefore replaced by a placeholder the
reconstruction can go around, counted, attributed to the rule it is a premise of, and its statement,
which the reconstruction determines, is scanned like a witness.  Read this way, the 427 proofs carry
37\,040 instantiations of world-lifted type, and 102\,656 sub-derivations remain unread: 47\,338
leaves standing bare, 28\,214 premises of \tc{Pure.combination}, 26\,721 of \tc{Pure.transitive},
208 of \tc{HOL.eq\_reflection}, 112 of \tc{iffD2}, 50 of \tc{disjE}, and thirteen under \tc{impI},
\tc{impCE}, \tc{iffI} and \tc{Pure.equal\_elim}.  Of their statements 28 mention an equation between
worlds, and they are two forms of one clause: the \tc{smt} proofs of \tc{Th1} and \tc{Th2} of
variant~3 leave a veriT clause with a Skolem constant for a world, in the four theories of that
variant and in the results that inline those proofs; no other unread statement does.  That clause
is a statement the prover made, not a term the argument substituted, and the reading below does not
count it; it is reported so that the reader can see what the certificate does not cover.

\paragraph{What is counted.}  Not every term of lifted type in a proof term is an instantiation the
argument makes.  Over the 37\,040, the machinery of the proof method --- the equational rules of
Pure and HOL, the body positions of the quantifier rules, the clause steps and Skolemisation lemmas
of \tc{metis} and \tc{smt} and the anonymous clause lemmas they leave in the theory --- carries
19\,668 of the 19\,699 flagged terms, and all 164 occurrences of an equation between worlds: veriT's
Skolem terms $\mathsf{SOME}\ v.\ \ldots$ for worlds, the prover's names for worlds it needs, not
terms the argument supplies.  The criterion of Section~\ref{sec:criterion} asks what a proof
substitutes for the property and proposition variables of the argument, and the detector counts
exactly that: a term counts if it is substituted for a variable of one of the developments' own
facts --- an axiom, a definition or the equation the unfolder derives from it, a theorem of the
theory under audit --- or for the bound variable of one of HOL's quantifier rules, the witness of
\tc{spec}, \tc{allE}, \tc{exI}, which the detector tells from the body position by the rule's own
statement.  Everything else is descended into, so that a fact applied inside it is still read, but
its own arguments are not counted.  The unfiltered reading is shipped beside the counted one, so
that a reader can see what the machinery contributes and check the boundary.  \tc{Th4} of the
mixed-quantifier variant, the proof \tc{metis} found with symmetry as an added premiss, shows the
boundary at work: its own clausification terms arrive under the names of the lemmas that introduced
them --- \tc{Meson.all\_forward}, \tc{Meson.ex\_forward}, \tc{HOL.spec}, the Skolem term under
\tc{Hilbert\_Choice.choice} --- and a detector that reads the name tells them from an instantiation
of a property variable; what flags \tc{Th4} is not its own proof but \tc{L}'s, which it uses.

\paragraph{The tally.}  Under this reading the 427 facts carry 1\,177 counted instantiations (1\,313
rule--term pairs): 851 witnesses of \tc{allE}, \tc{spec} and \tc{exI}, 364 arguments of the
definitional equations, and 98 of the named facts of the developments --- \tc{Ax1Gen} twenty times,
\tc{Ax4} fourteen, \tc{MC} nine, Scott's \tc{A2} seven, and the rest the Hilbert axioms,
comprehension, quantifier and empty-property lemmas and the modal schemes of the two test theories.
Twenty-five facts carry a flagged witness, 24 of them not certified.  Fourteen carry a Skolem
function of a world: \tc{L}, ``the God-like property is positive'', is proved in the possibilist and
the mixed-quantifier setting of variant~2 by \tc{smt} with the default solver, and that proof
instantiates a universally quantified property with a Skolem function applied to a world,
$v_{5,3}\ a\ b\ c$ --- the prover's name for a property that varies with the world, which is what
the \textsc{rigid} flag is for; the six results of each theory whose proofs pass through \tc{L}
--- \tc{Th4}, \tc{Th5}, \tc{MC}, \tc{Filter}, \tc{UltraFilter}, \tc{UniqueEss3} --- carry the same
witness, since the detector reads the proof of \tc{L} where it is used.  Skolemisation is the
prover's device for eliminating a quantifier; the hand-written proofs of the same fourteen
statements, in both systems, substitute the set of positive properties and nothing else.  In the
actualist setting of variant~2, where \tc{L} is proved by \tc{smt (verit)}, \tc{L} and what depends
on it are certified.  Ten carry a constant of HOL's library in a witness position
(\textsc{foreign}): the proof by \tc{auto} of the test lemma \tc{EqPrimLeib}, in both test
theories, instantiates the Leibniz quantifier with $x \in \{a\}$, the library's rendering of
equality with $a$; \tc{UltraFilter} of Scott's variant in the possibilist setting, proved by
\tc{blast}, instantiates a property quantifier with $\mathit{listsp}\ A\ []$, a list predicate that
is true of the empty list and stands for the universal property; and \tc{UniqueEss3}, proved by
\tc{auto} through \tc{MC} in nine theories, leaves a term built from the converse of a relation and
an uninstantiated variable, which denotes nothing the argument uses.  Each of these denotes
something the object language has --- an equality, the universal property, an arbitrary property
--- and none is a term of it.  The one remaining flag is informational: a witness of \tc{PosProps}
in Scott's variant carries the existence predicate, which is object-definable.  Nothing inherits,
the descent having already attributed \tc{L}'s witness to every proof that runs through it.

\paragraph{The cross-check.}  A proof written out in Isar carries the author's instantiations in
place of those \tc{smt} or \tc{metis} find, so a flag on an automated step can be answered by
writing the step out.  \tc{MC} written out without automation (\tc{MCbyHand}, session
\tc{MCByHand}) carries two counted instantiations, both object-level.  The 19 theories of the
cross-check (session \tc{IsarAudit}) --- six replaying single results of the \emph{Notes}, one
re-running a theory of theirs over logic \textsf{K} with its thirteen proofs copied verbatim, two
proving the seven results of variant~2 whose proofs the run above does not certify, in the
possibilist and in the mixed setting, and ten proving, one each, the statements whose proofs
substitute a library term --- carry 76 counted instantiations and no flag between them.  The
account of which postulates a theorem uses (\tc{AxiomDeps.ML}, Table~\ref{tab:deps} and
Section~\ref{sec:discussion}) folds over the recorded proof atoms without reconstructing anything,
and does not depend on the reading.

\paragraph{The \lean{} census.}  Of the named results of the \emph{Notes}' theories, the only ones
without a counterpart in the port are two that the port renames and 72 of the 77 the \emph{Notes}
close with \tc{oops}, which abandons the proof and keeps the statement --- 26 model probes, all but
one of them on the statement \tc{True}, 35 with a Nitpick countermodel, 10 left open, the five the
port proves, and one repetition of a statement the same theory proves; the statements of
questions~1 and~2 are among those left open.  The statements the \emph{Notes} leave unproved are
recorded as placeholders that no theorem uses.  The 318 instantiations (counted per theorem) fall
into 44 shapes: the complement ${\sim}\varphi$ of a property variable (49), $G$ itself (38), the
universal and the empty property $\lambda x.\top$, $\lambda x.\bot$ (22, 21) and their complements,
necessary existence $E$ (19), a necessary inclusion $\varphi\supset_N\psi$ (14), self-identity and
self-difference $\lambda x.\ x = x$, $\lambda x.\ x\neq x$ (10 each), the constant property
$\lambda x.\varphi$ of a proposition variable (12, the witness of modal collapse and of
comprehension), and Leibniz identity with a fixed individual $\lambda z.\ z\equiv x$ (9, the witness
of \tc{UniqueEss3}).  The most complex instantiations are the two properties of the inconsistency
proof, ``having the empty essence implies that necessarily some actual thing is empty'' and
$\lambda x.\Box\bot$, and the set $\lambda\psi.\ \psi = G$ handed to \tc{Ax1Gen} in the proof of
\tc{Th4}.  The full list is \tc{census.txt} in the sources.

\providecommand{\bibfont}{}\renewcommand{\bibfont}{\small}
\setlength{\bibsep}{6pt}
\bibliographystyle{plainnat}
\bibliography{refs}

\begin{thebibliography}{21}
\providecommand{\natexlab}[1]{#1}
\providecommand{\url}[1]{\texttt{#1}}
\expandafter\ifx\csname urlstyle\endcsname\relax
  \providecommand{\doi}[1]{doi: #1}\else
  \providecommand{\doi}{doi: \begingroup \urlstyle{rm}\Url}\fi

\bibitem[Anderson and Gettings(2017)]{AndersonGettings1996}
C.~Anthony Anderson and Michael Gettings.
\newblock {G}{\"o}del's ontological proof revisited.
\newblock In Petr H{\'a}jek, editor, \emph{G{\"o}del '96: Logical Foundations
  of Mathematics, Computer Science and Physics}, volume~6 of \emph{Lecture
  Notes in Logic}, pages 167--172. Cambridge University Press, Cambridge, 2017.
\newblock \doi{10.1017/9781316716939.011}.
\newblock Reprint of the 1996 proceedings.

\bibitem[Benzm{\"u}ller(2021)]{SimplifiedOntologicalArgument-AFP}
Christoph Benzm{\"u}ller.
\newblock Exploring simplified variants of {G{\"o}del's} ontological argument
  in {Isabelle/HOL}.
\newblock \emph{Archive of Formal Proofs}, November 2021.
\newblock ISSN 2150-914x.
\newblock \url{https://isa-afp.org/entries/SimplifiedOntologicalArgument.html},
  Formal proof development.

\bibitem[Benzm{\"u}ller(2026)]{Benzmueller2026comment}
Christoph Benzm{\"u}ller.
\newblock A comment on modal collapse and ultrafilters in {G}{\"o}del's
  ontological argument.
\newblock arXiv:2608.07578, 2026.
\newblock \url{https://arxiv.org/abs/2608.07578}.

\bibitem[Benzm{\"u}ller and Paulson(2008)]{B9}
Christoph Benzm{\"u}ller and Lawrence~C. Paulson.
\newblock Exploring properties of normal multimodal logics in simple type
  theory with {LEO-II}.
\newblock In Christoph Benzm{\"u}ller, Chad~E. Brown, J{\"o}rg Siekmann, and
  Richard Statman, editors, \emph{Reasoning in Simple Type Theory: Festschrift
  in Honor of {Peter B. Andrews} on His 70th Birthday}, Studies in Logic,
  Mathematical Logic and Foundations, pages 386--406. College Publications,
  2008.

\bibitem[Benzm{\"u}ller and Paulson(2009)]{R45}
Christoph Benzm{\"u}ller and Lawrence~C. Paulson.
\newblock Quantified multimodal logics in simple type theory.
\newblock SEKI Report SR-2009-02, Saarland University, 2009.
\newblock arXiv:0905.2435.

\bibitem[Benzm{\"u}ller and Paulson(2013)]{J26}
Christoph Benzm{\"u}ller and Lawrence~C. Paulson.
\newblock Quantified multimodal logics in simple type theory.
\newblock \emph{Logica Universalis}, 7\penalty0 (1):\penalty0 7--20, 2013.
\newblock \doi{10.1007/s11787-012-0052-y}.

\bibitem[Benzm{\"u}ller and Scott(2025)]{J75}
Christoph Benzm{\"u}ller and Dana Scott.
\newblock Notes on {G{\"o}del's} and {Scott's} variants of the ontological
  argument.
\newblock \emph{Monatshefte f{\"u}r Mathematik}, 208:\penalty0 569--611, 2025.
\newblock \doi{10.1007/s00605-025-02078-x}.

\bibitem[Benzm{\"u}ller and Woltzenlogel~Paleo(2013)]{GoedelGod-AFP}
Christoph Benzm{\"u}ller and Bruno Woltzenlogel~Paleo.
\newblock G{\"o}del's {God} in {Isabelle/HOL}.
\newblock \emph{Archive of Formal Proofs}, November 2013.
\newblock ISSN 2150-914x.
\newblock \url{https://isa-afp.org/entries/GoedelGod.html}, Formal proof
  development.

\bibitem[Benzm{\"u}ller and Woltzenlogel~Paleo(2014)]{ECAI2014}
Christoph Benzm{\"u}ller and Bruno Woltzenlogel~Paleo.
\newblock Automating {G}{\"o}del's ontological proof of {G}od's existence with
  higher-order automated theorem provers.
\newblock In Torsten Schaub, Gerhard Friedrich, and Barry O'Sullivan, editors,
  \emph{ECAI 2014 -- 21st European Conference on Artificial Intelligence},
  volume 263 of \emph{Frontiers in Artificial Intelligence and Applications},
  pages 93--98. IOS Press, 2014.
\newblock \doi{10.3233/978-1-61499-419-0-93}.

\bibitem[Benzm{\"{u}}ller and Woltzenlogel~Paleo(2015)]{C44}
Christoph Benzm{\"{u}}ller and Bruno Woltzenlogel~Paleo.
\newblock Interacting with modal logics in the {Coq} proof assistant.
\newblock In Lev~D. Beklemishev and Daniil~V. Musatov, editors, \emph{Computer
  Science - Theory and Applications - 10th International Computer Science
  Symposium in Russia, {CSR} 2015, Listvyanka, Russia, July 13-17, 2015,
  Proceedings}, volume 9139 of \emph{LNCS}, pages 398--411. Springer, 2015.
\newblock \doi{10.1007/978-3-319-20297-6_25}.

\bibitem[Benzm{\"u}ller and Woltzenlogel~Paleo(2016)]{IJCAI2016}
Christoph Benzm{\"u}ller and Bruno Woltzenlogel~Paleo.
\newblock The inconsistency in {G}{\"o}del's ontological argument: A success
  story for {AI} in metaphysics.
\newblock In Subbarao Kambhampati, editor, \emph{Proceedings of the
  Twenty-Fifth International Joint Conference on Artificial Intelligence (IJCAI
  2016)}, pages 936--942. AAAI Press, 2016.

\bibitem[Benzm{\"u}ller et~al.(2020)Benzm{\"u}ller, Parent, and van~der
  Torre]{J48}
Christoph Benzm{\"u}ller, Xavier Parent, and Leendert van~der Torre.
\newblock Designing normative theories for ethical and legal reasoning:
  {LogiKEy} framework, methodology, and tool support.
\newblock \emph{Artificial Intelligence}, 287:\penalty0 103348, 2020.
\newblock ISSN 0004-3702.
\newblock \doi{10.1016/j.artint.2020.103348}.

\bibitem[Blanchette and Nipkow(2010)]{Nitpick}
Jasmin~Christian Blanchette and Tobias Nipkow.
\newblock Nitpick: A counterexample generator for higher-order logic based on a
  relational model finder.
\newblock In \emph{Interactive Theorem Proving (ITP 2010)}, volume 6172 of
  \emph{LNCS}, pages 131--146. Springer, 2010.
\newblock \doi{10.1007/978-3-642-14052-5_11}.

\bibitem[Brown(2005)]{Brown2005}
Chad~E. Brown.
\newblock Encoding hybrid logic into higher-order logic.
\newblock Slides of an invited talk at LORIA, Nancy, April 2005, 2005.

\bibitem[de~Moura and Ullrich(2021)]{Lean4}
Leonardo de~Moura and Sebastian Ullrich.
\newblock The {Lean} 4 theorem prover and programming language.
\newblock In \emph{Automated Deduction -- CADE 28}, volume 12699 of
  \emph{LNCS}, pages 625--635. Springer, 2021.
\newblock \doi{10.1007/978-3-030-79876-5_37}.

\bibitem[Fitting(2002)]{Fitting2002}
Melvin Fitting.
\newblock \emph{Types, Tableaus, and {G}{\"o}del's {G}od}, volume~12 of
  \emph{Trends in Logic}.
\newblock Kluwer, 2002.
\newblock \doi{10.1007/978-94-010-0411-4}.

\bibitem[Kanckos and Lethen(2021)]{KanckosLethen2021}
Annika Kanckos and Tim Lethen.
\newblock The development of {G}{\"o}del's ontological proof.
\newblock \emph{The Review of Symbolic Logic}, 14\penalty0 (4):\penalty0
  1011--1029, 2021.
\newblock \doi{10.1017/S1755020319000479}.

\bibitem[Kanckos and Woltzenlogel~Paleo(2017)]{KanckosPaleo2017}
Annika Kanckos and Bruno Woltzenlogel~Paleo.
\newblock Variants of {G}{\"o}del's ontological proof in a natural deduction
  calculus.
\newblock \emph{Studia Logica}, 105\penalty0 (3):\penalty0 553--586, 2017.
\newblock \doi{10.1007/s11225-016-9700-1}.

\bibitem[M{\"u}hlenbeck and Benzm{\"u}ller(2026)]{MuehlenbeckBenzmueller2026}
Cordelia M{\"u}hlenbeck and Christoph Benzm{\"u}ller.
\newblock On the maximality of positive properties and modal collapse in
  variants of {G}{\"o}del's ontological proof for the existence of {God}.
\newblock \emph{Logic and Logical Philosophy}, 2026.
\newblock \doi{10.12775/LLP.2026.007}.
\newblock Online first, pp.~1--21.

\bibitem[Wisniewski and Steen(2015)]{WisniewskiSteen2014}
Max Wisniewski and Alexander Steen.
\newblock Embedding of quantified higher-order nominal modal logic into
  classical higher-order logic.
\newblock In Christoph Benzm{\"u}ller and Jens Otten, editors, \emph{ARQNL
  2014. Automated Reasoning in Quantified Non-Classical Logics}, volume~33 of
  \emph{EPiC Series in Computing}, pages 59--64. EasyChair, 2015.
\newblock \doi{10.29007/dzc2}.

\bibitem[Wisniewski et~al.(2016)Wisniewski, Steen, and Benzm{\"u}ller]{W56}
Max Wisniewski, Alexander Steen, and Christoph Benzm{\"u}ller.
\newblock {TPTP} and beyond: Representation of quantified non-classical logics.
\newblock In Christoph Benzm{\"u}ller and Jens Otten, editors, \emph{ARQNL
  2016. Automated Reasoning in Quantified Non-Classical Logics}, volume 1770 of
  \emph{CEUR Workshop Proceedings}, pages 51--65. CEUR-WS.org, 2016.

\end{thebibliography}

\end{document}